\lstdefinelanguage{program}{%
  keywords={%
    atomic,%
    assume,assert,call,return,new,%
    false,true,duplicate,restart,lock,unlock,%
    locate,insert,delete,contains,removeRight,rotateRightLeft,
    rcu_read_lock,rcu_read_unlock,synchronize_rcu%
  },  %
  morecomment=[l]{//},
  morecomment=[s]{/*}{*/},
  morecomment=[n]{(**}{**)},
  mathescape=true,
  escapeinside=`',
}
\newcommand{\ignore}[1]{}
\spnewtheorem*{notation}{Notation}{\itshape}{\rmfamily}
\newcommand{\SCfunc}{f}
\newcommand{\Inv}{\mathit{Inv}}
\newcommand{\true}{\textit{true}}
\newcommand{\false}{\textit{false}}
\newcommand{\Bad}{\mathit{Bad}}
\newcommand{\prop}[2]{(#1,#2)}
\newcommand{\kmodels}{\models^k}
\newcommand{\Spacer}{\textsc{Spacer}\xspace}
\newcommand{\papercomment}[1]{}
\newcommand{\Vars}{\mathcal{V}}
\newcommand{\var}{v}
\newcommand{\BMC}{\mathit{BMC}}
\newcommand{\Cond}{C}
\newcommand{\TS}{T}
\newcommand{\States}{S}
\newcommand{\state}{s}
\newcommand{\Trans}{R}
\newcommand{\Tr}{\Trans}
\newcommand{\preS}{\textit{pre}}
\newcommand{\postS}{\textit{post}}
\newcommand{\compop}[1]{{#1^{\SCfunc}}}
\newcommand{\compnof}[1]{{#1^{\|k}}}
\newcommand{\TScomp}{\compop{\TS}}
\newcommand{\TScompnof}{\compnof{\TS}}
\newcommand{\Fcomp}{\compnof{\Terminal}}
\newcommand{\Scomp}{\compnof{\States}}
\newcommand{\Fcompnof}{\compnof{\Terminal}}
\newcommand{\Scompnof}{\compnof{\States}}
\newcommand{\Tcompnof}{\compnof{\Trans}}
\newcommand{\TScompprime}{{\TS^{\SCfunc'}}}
\newcommand{\TScompdoubleprime}{{\TS^{\SCfunc''}}}
\newcommand{\scomp}{{\state^{\|}}}
\newcommand{\Tcomp}{\compop{\Trans}}
\newcommand{\Varscomp}{\compnof{\Vars}}
\newcommand{\picomp}{{\pi^{\|}}}
\newcommand{\Terminal}{F}
\newcommand{\Preds}{\mathcal{P}}
\newcommand{\pred}{p}
\newcommand{\A}[1]{A_{#1}}
\newcommand{\AStates}{\hat{\States}} %
\newcommand{\ATr}{\hat{\Trans}} %
\newcommand{\AF}{\hat{\Terminal}} %
\newcommand{\astate}{\hat{s}}
\newcommand{\cex}{\mathit{cex}}
\newcommand{\lang}{\mathcal{L}}
\newcommand{\powerset}[1]{\mathbb{P}(#1)}
\renewcommand{\implies}{\Rightarrow}
\newcommand{\langof}[1]{\lang_{#1}}
\newcommand{\Unreach}{\textit{Unreach}}
\newcommand{\toBool}[1]{#1(\mathcal{B})}
\newcommand{\toPred}[1]{#1(\Preds)}
\newcommand{\Pdsc}{\textsc{Pdsc}\xspace}
\newcommand{\SeaHorn}{\textsc{SeaHorn}\xspace}
\newcommand{\Syn}{\textsc{Synonym}\xspace}
\lstdefinestyle{CStyle}{
    language=C,
    showtabs=false,
    tabsize=2,
    basicstyle=\footnotesize
}
\newif\iflong
\newif\ifextended
\Crefname{algorithm}{Alg.}{Alg.}
\Crefname{lemma}{Lem.}{Lem.}
\Crefname{figure}{Fig.}{Fig.}
\Crefname{definition}{Def.}{Def.}
\Crefname{section}{Sec.}{Sec.}
\Crefname{appendix}{App.}{App.}
\author{Ron Shemer\inst{1} \and Arie Gurfinkel\inst{2}\and Sharon Shoham\inst{1} \and Yakir Vizel\inst{2}}
\title{Property Directed Self Composition}
\institute{Tel Aviv University \and University of Waterloo \and The Technion}
\begin{document}

\pagestyle{plain}
\maketitle

\begin{abstract}
We address the problem of verifying \emph{$k$-safety properties}: properties that refer to $k$ interacting executions of a program.
A prominent way to verify $k$-safety properties is by \emph{self composition}. In this approach, the problem of checking $k$-safety over the original program is reduced to checking an ``ordinary'' safety property over a program that executes $k$ copies of the original program in some order. The way in which the copies are composed determines how complicated it is to verify the composed program.
We view this composition as provided by a \emph{semantic self composition function} that maps each state of the composed program to the copies that make a move.
Since the ``quality'' of a self composition function is measured by the ability to verify the safety of the composed program,
we formulate the problem of inferring a self composition function together with the inductive invariant needed to verify safety of the composed program, where both are restricted to a given language. %
We develop a \emph{property-directed} inference algorithm that, given a set of predicates, infers composition-invariant pairs expressed by Boolean combinations of the given predicates, or determines that no such pair exists.
We implemented our algorithm and demonstrate that it is able to find self compositions that are beyond reach of existing tools.
\end{abstract}

\section{Introduction}\label{sec:intro}

Many relational properties, such as noninterference~\cite{DBLP:journals/cacm/DenningD77},
determinism~\cite{DBLP:conf/sec/KarimpourIN15}, service level agreements~\cite{DBLP:journals/jcs/ClarksonS10},
and more, can be reduced to the problem of $k$-safety. Namely, reasoning
about $k$ different traces of a program simultaneously. A common approach to
verifying $k$-safety properties is by means of \emph{self composition}, where
the program is composed with $k$ copies of itself~\cite{DBLP:conf/csfw/BartheDR04,DBLP:conf/sas/TerauchiA05}.
A state of the composed program consists of the states of each copy,
and a trace naturally corresponds to $k$ traces of the original program. Therefore, $k$-safety
properties of the original program become ordinary safety properties of the
composition, hence reducing $k$-safety verification to ordinary safety.
This enables reasoning about $k$-safety properties using any of the
existing techniques for safety verification such as Hoare logic~\cite{DBLP:journals/cacm/Hoare69} or
model checking~\cite{DBLP:reference/mc/2018}.

While self composition is sound and complete for $k$-safety, its applicability
is questionable for two main reasons:
\begin{inparaenum}[(i)] \item considering several copies of the program
greatly increases the state space; and
\item the way in which the different copies are composed when reducing the problem
to safety verification affects the complexity of the resulting
self composed program, and as such affects the complexity of verifying it.
\end{inparaenum}
Improving the applicability of self composition has been the topic of many works~\cite{DBLP:conf/fm/BartheCK11,DBLP:conf/pldi/SousaD16,DBLP:conf/esop/EilersMH18,DBLP:conf/sat/GuptaSMB18,DBLP:conf/cav/PickFG18,DBLP:conf/cav/YangVSGM18}.
However, most efforts are focused on compositions that are pre-defined, or only depend on
syntactic similarities.

In this paper, we take a different approach; we build upon the observation that by
choosing the ``right'' composition, the verification can
be greatly simplified by leveraging ``simple'' correlations between the executions.
To that end, we propose an algorithm, called \Pdsc, for inferring a \emph{property directed} self composition.
Our approach uses a \emph{dynamic} composition, where the
composition of the different copies can change during verification, directed at
simplifying the verification of the composed program.

Compositions considered in previous work differ in the order in
which the copies of the program  execute: either synchronously, asynchronously, or in some mix of the
two~\cite{DBLP:conf/fm/ZaksP08,DBLP:conf/lfcs/BartheCK13,DBLP:conf/esop/EilersMH18}.
To allow %
general compositions,
we define a \emph{composition function} that maps every state of the composed program
to the set of copies that are scheduled in the next step.
This determines the order of execution for the different copies, and thus induces the self composed
program.
Unlike most previous works where the composition is pre-defined based on syntactic
rules only, our composition %
is \emph{semantic} as it is defined over
the state of the composed program.

To capture the %
difficulty of verifying the composed program,
we consider verification by means of inferring an inductive invariant,
parameterized by a language for expressing the inductive invariant.
Intuitively, the more expressive the language needs to be,
the more difficult the verification task is.
We then define the problem of inferring a composition function \emph{together} with an inductive invariant for verifying
the safety of the composed program, where both are restricted to a given language.
Note that for a fixed language $\lang$, an inductive invariant may exist for
some composition function but not for another
\ifextended
\footnote{See \Cref{sec:insufficieny} for an example that requires a non-linear inductive invariant with a composition that is based on the control structure but has a linear invariant with another.}
\else
\footnote{See the extended version~\cite{extended} %
for an example that requires a non-linear inductive invariant with a composition that is based on the control structure but has a linear invariant with another.}
\fi
.
Thus, the restriction to $\lang$
defines a target for the inference algorithm, which is now
directed at finding a composition that admits an inductive invariant in $\lang$.

\begin{example}
To demonstrate our approach, consider the program in \Cref{fig:example}. The program inserts a new
value into an array. We assume that the array $A$ and its length $len$ are ``low''-security variables,
while the inserted value $h$ is ``high''-security. The first loop finds the location in which
$h$ will be inserted. Note that the number of iterations depends on the value of $h$.
Due to that, the second loop executes to ensure that
the output $i$ (which corresponds to the number of iterations) does not leak sensitive
data. As an example, we emphasize that without the second loop, $i$ could leak the
location of $h$ in $A$.
To express the property that $i$ does not leak sensitive data, we use the 2-safety property that
in any two executions, if the inputs $A$ and $len$ are the same, so is the output $i$.

To verify the 2-safety property, consider two copies of the program.
Let the language $\lang$ for verifying the self composition be defined by the predicates
depicted in \Cref{fig:example}. The most natural self composition to consider is a lock-step
composition, where the copies execute synchronously. However, for such a composition
the composed program may reach a state where, for example, $i_1 = i_2 + 1$.
This occurs when the first copy exists the first loop, while the second copy
is still executing it. Since the language cannot express this correlation between
the two copies, no inductive invariant suffices to verify that $i_1 = i_2$ when
the program terminates.

In contrast, when verifying the 2-safety property,
\Pdsc directs its search towards a composition function
for which an inductive invariant in $\lang$ does exist.
As such, it infers the composition function depicted in \Cref{fig:example},
as well as an inductive invariant in $\lang$. The invariant  for this composition implies that $i_1 = i_2$ at every state.

\end{example}

As demonstrated by the example, \Pdsc focuses on logical languages based on
predicate abstraction~\cite{DBLP:conf/cav/GrafS97},
where inductive invariants can be inferred by model checking.
In order to infer a composition function that admits an inductive invariant in $\lang$,
\Pdsc starts from a default composition function, and modifies its definition based
on the reasoning performed by the model checker during verification. As the
composition function is part of the verified model (recall that it is defined over the program state),
different compositions are part
of the state space explored by the model checker. As a result, a key ingredient of
\Pdsc is identifying ``bad'' compositions that prevent it from finding an
inductive invariant in $\lang$. It is important to note that a naive algorithm
that tries all possible composition functions has a time complexity $O(2^{2^{|\Preds|}})$,
where $\Preds$ is the set of predicates considered.
However, integrating the search for a composition function into the model
checking algorithm allows us to reduce the time complexity of the algorithm to
$2^{O(|\Preds|)}$, where we show that the problem is in fact PSPACE-hard.\ifextended \else\footnote{Proofs of the claims made in this paper can be found in the extended version~\cite{extended}.}\fi

We implemented \Pdsc using \SeaHorn~\cite{DBLP:conf/cav/GurfinkelKKN15},
Z3~\cite{DBLP:conf/tacas/MouraB08} and \Spacer~\cite{DBLP:conf/cav/KomuravelliGC14} and evaluated it on examples
that demonstrate the need for nontrivial semantic compositions.
Our results clearly show that \Pdsc can solve complex examples by inferring
the required composition, while other tools cannot verify these examples.
We emphasize that for these particular examples, lock-step composition
is not sufficient.
We also evaluated \Pdsc on the
examples from~\cite{DBLP:conf/pldi/SousaD16,DBLP:conf/cav/PickFG18} that are proven with the trivial
lock-step composition.
On these examples, \Pdsc is comparable to state of the art tools.
\begin{figure}[t]
  \centering
\begin{tabular}{ll}
\hspace{-1cm}
\begin{minipage}{0.48\textwidth}
  \lstset{basicstyle=\ttfamily\scriptsize}
  \lstset{language=C,morekeywords={assert,predicates,nd,composition}}
\begin{lstlisting}[mathescape=true]
int arrayInsert(int[] A, int len, int h) {
     int i=0;
  1: while (i < len && A[i] < h)
       i++;
  2: len = shift_array(A, i, 1);
     A[i] = h;
  3: while (i < len)
       i++;
  4: return i;
  }

predicates: $i_1 = i_2$, $i_1 < len_1$, $i_2 < len_2$,
     $A_1[i_1] < h_1$, $A_2[i_2] < h_2$, $len_1 = len_2$,
     $len_1 = len_2+1$, $len_2 = len_1+1$
\end{lstlisting}%
\end{minipage}
&
\begin{minipage}{0.4\textwidth}
  \lstset{basicstyle=\ttfamily\scriptsize}
  \lstset{language=C,morekeywords={assert,predicates,nd,composition}}
\begin{lstlisting}[mathescape=true]
composition:
if($pc_1 < 3$ && ($pc_2>0$ || !$cond_1$)
   && ($pc_2==3$||($pc_2==0$ && $cond_2$)))
       step(1);
else if ($pc_2<3$ && ($pc_1>0$ || !$cond_2$)
         && ($pc_1==3$ || ($pc_1==0$ && $cond_1$)))
            step(2);
else step(1,2);

       $cond_1$ := $i_1<len_1$ && $A_1[i_1]<h_1$
       $cond_2$ := $i_2<len_2$ && $A_2[i_2]<h_2$

\end{lstlisting}
\end{minipage}
\end{tabular}
\vspace{-0.1in}
  \caption{Constant-time insert to an array.}
  \label{fig:example}
\end{figure}

\vspace{-0.3cm}
\subsubsection{Related work.}
This paper addresses the problem of verifying k-safety properties (also called hyperproperties~\cite{DBLP:conf/csfw/ClarksonS08}) by means of self composition.
Other approaches tackle the problem without self-composition, and often focus on more specific properties, most noticeably the $2$-safety noninterference property (e.g.~\cite{DBLP:conf/pldi/AntonopoulosGHK17,DBLP:conf/cav/YangVSGM18}).
Below we focus on works that use self-composition. %

Previous work such as~\cite{DBLP:conf/csfw/BartheDR04,DBLP:conf/fm/BartheCK11,DBLP:conf/lfcs/BartheCK13,DBLP:conf/kbse/FelsingGKRU14,DBLP:conf/sas/TerauchiA05,DBLP:conf/esop/EilersMH18} considered self composition (also called product programs) where the composition function is constant and set a-priori, using syntax-based hints. While useful in general,
such self compositions may sometimes result in programs that are too complex to verify.
This is in contrast to our approach, where the composition function is evolving during verification,
and is adapted to the capabilities of the model checker.

The work most closely related to ours is~\cite{DBLP:conf/pldi/SousaD16} which introduces Cartesian Hoare Logic (CHL) for verification
of $k$-safety properties, and designs a verification framework for this logic.
This work is further improved in~\cite{DBLP:conf/cav/PickFG18}.
These works %
search for a proof in CHL, and in doing so, implicitly modify the composition.
Our work infers the composition explicitly %
and can use
off-the-shelf model checking tools. More importantly, when loops
are involved both~\cite{DBLP:conf/pldi/SousaD16} and~\cite{DBLP:conf/cav/PickFG18} use lock-step
composition and align loops syntactically. Our algorithm, in contrast, does
not rely on syntactic similarities, and can handle loops that cannot be aligned trivially.

There have been several results in the context of harnessing Constraint Horn Clauses (CHC) solvers for verification of relational properties~\cite{DBLP:conf/sas/AngelisFPP16,DBLP:conf/lpar/MordvinovF17}. 
Given several copies of a CHC system, %
a product CHC system that synchronizes the different copies is created by a syntactical analysis of the rules in the CHC system. 
These works restrict the synchronization points to CHC predicates (i.e., program locations), and consider only one synchronization (obtained via transformations of the system of CHCs). 
On the other hand, our algorithm iteratively searches for a good synchronization (composition), and considers synchronizations that depend on program state.

\paragraph{Equivalence checking and regression verification.}
Equivalence checking is another closely related research field, where a composition of several
programs is considered. As an example, equivalence checking is applied to verify the correctness
of compiler optimizations~\cite{DBLP:conf/fm/ZaksP08,DBLP:conf/oopsla/0001SCA13,DBLP:conf/aplas/DahiyaB17,DBLP:conf/sat/GuptaSMB18}. In~\cite{DBLP:conf/oopsla/0001SCA13}
the composition is determined by a brute-force search for possible synchronization points.
While this brute-force search resembles our approach for finding the correct composition,
it is not guided by the verification process.
The works in~\cite{DBLP:conf/aplas/DahiyaB17,DBLP:conf/sat/GuptaSMB18} identify
possible synchronization points syntactically, and try to match them
during the construction of a %
simulation relation between programs.

Regression verification also requires the ability to show equivalence between
different versions of a program~\cite{DBLP:conf/kbse/FelsingGKRU14,DBLP:conf/dac/GodlinS09,DBLP:conf/fm/StrichmanV16}.
The problem of synchronizing unbalanced loops appears
in~\cite{DBLP:conf/fm/StrichmanV16} in the form of unbalanced recursive function calls.
To allow synchronization in such cases, the user can specify different unrolling
parameters for the different copies. In contrast, our approach relies only on
user supplied predicates that are needed to establish correctness, while synchronization
is handled automatically.

\section{Preliminaries}

In this paper we reason about programs by means of the transition systems defining their semantics.
A transition system is a tuple $\TS = (\States, \Tr,\Terminal)$, where $\States$ is a set of states, $\Tr \subseteq \States \times \States$ is a transition relation that specifies the steps in an execution of the program, and $\Terminal \subseteq \States$ is a set  of \emph{terminal states} $\Terminal \subseteq \States$ such that every terminal state $\state \in \Terminal$ has an outgoing transition to itself and no additional transitions (terminal states allow us to reason about pre/post specifications of programs).
An \emph{execution} or \emph{trace} $\pi = s_0,s_1,\ldots$ is a (finite or infinite) sequence of states such that for every $i \geq 0$, $(s_i,s_{i+1}) \in \Tr$.
The execution is \emph{terminating} if there exists $0 \leq i \leq |\pi|$ such that $s_i \in \Terminal$. In this case, the suffix of the execution is of the form $s_i, s_i,\ldots$ and we say that $\pi$ ends at $s_i$.

As usual, we represent transition systems using logical formulas over a set of variables, corresponding to the program variables.
We denote the set of variables by $\Vars$. The set of terminal states is represented by a formula over $\Vars$ and the transition relation  is represented by a formula over $\Vars \uplus \Vars'$, where $\Vars$ represents the pre-state of a transition and $\Vars' = \{v' \mid v \in \Vars\}$ represents its post-state.
In the sequel, we use sets of states and their symbolic representation via formulas interchangeably.

\paragraph{Safety and inductive invariants.}
We consider safety properties defined via pre/post conditions.\footnote{Our results can be extended to arbitrary safety (and $k$-safety) properties by introducing ``observable'' states to which the property may refer.}
A \emph{safety property} is a pair $\prop{\preS}{\postS}$ where $\preS, \postS$ are formulas over $\Vars$, representing subsets of $\States$, denoting the pre- and post-condition, respectively.
$\TS$ \emph{satisfies} $\prop{\preS}{\postS}$, denoted $\TS \models \prop{\preS}{\postS}$, if every terminating execution $\pi$ of $\TS$ that starts in a state $s_0$ such that $s_0 \models \preS$ ends in a state $s$ such that $s \models \postS$.
In other words, for every state $s$ that is reachable in $\TS$ from a state in $\preS$
we have that %
$s \models \Terminal \rightarrow \postS$.

A prominent way to verify safety properties is by finding an inductive invariant. An \emph{inductive invariant} for a transition system $\TS$ and a safety property $\prop{\preS}{\postS}$ is a formula $\Inv$ such that
\begin{inparaenum}
\item[(1)] $\preS \Rightarrow \Inv$ (initiation),
\item[(2)] $\Inv \wedge \Tr \Rightarrow \Inv'$ (consecution), and
\item[(3)] $\Inv \Rightarrow (\Terminal \rightarrow \postS)$ (safety),
\end{inparaenum}
where $\varphi \Rightarrow \psi$ denotes the validity of $\varphi \to \psi$, and $\varphi'$ denotes $\varphi(\Vars')$, i.e., the formula obtained after substituting every $v \in \Vars$ by the corresponding $v' \in \Vars$.
If there exists such an inductive invariant, then $\TS \models \prop{\preS}{\postS}$.

\paragraph{$k$-safety.}
A \emph{$k$-safety property} refers to $k$ interacting executions of $\TS$. Similarly to an ordinary property, it is defined by $\prop{\preS}{\postS}$, except that $\preS$ and $\postS$ are defined over $\Vars^1 \uplus \ldots \uplus \Vars^k$ where $\Vars^i = \{v^i \mid v \in \Vars\}$ denotes the $i$th copy of the program variables. As such, $\preS$ and $\postS$ represent sets of $k$-tuples of program states (\emph{$k$-states} for short): for a $k$-tuple $(s_1,\ldots,s_k)$ of states and a formula $\varphi$ over $\Vars^1 \uplus \ldots \uplus \Vars^k$, we say that $(s_1,\ldots,s_k) \models \varphi$ if $\varphi$ is satisfied when for each $i$, the assignment of $\Vars^i$ is determined by $s_i$.
We say that $\TS$ \emph{satisfies} $\prop{\preS}{\postS}$, denoted $\TS \kmodels \prop{\preS}{\postS}$, if for every $k$ terminating executions $\pi^1,\ldots,\pi^k$ of $\TS$ that start in states $s_1,\ldots,s_k$, respectively, such that $(s_1,\ldots,s_k) \models \preS$, it holds that they end in states $t_1,\ldots,t_k$, respectively, such that $(t_1,\ldots,t_k) \models \postS$.

For example, the \emph{non interference} property may be specified by the
following $2$-safety property:
\ifextended
\[
\preS \;=  \bigwedge_{\var \in \mathrm{LowIn}} \var^1 = \var^2   \qquad  \qquad  \postS \;=\; \bigwedge_{\var \in \mathrm{LowOut}} \var^1 = \var^2  %
\]
\else
$\preS \;=  \bigwedge_{\var \in \mathrm{LowIn}} \var^1 = \var^2,  \ \postS \;=\; \bigwedge_{\var \in \mathrm{LowOut}} \var^1 = \var^2$
\fi
where $\mathrm{LowIn}$ and $\mathrm{LowOut}$ denote subsets of the program inputs, resp. outputs, that are considered
``low security'' and the rest are classified as ``high security''.
This property asserts that every $2$ terminating executions that start in states
that agree on the ``low security'' inputs end in states that
agree on the low security outputs, i.e., the outcome does not
depend on any ``high security'' input and, hence, does not leak
secure information.

Checking $k$-safety properties reduces to checking ordinary safety properties by
creating a \emph{self composed program} that consists of $k$ copies of the
transition system, each with its own copy of the variables, that run in parallel
in some way. Thus, the self composed program is defined over variables
$\Varscomp = \Vars^1 \uplus \ldots \uplus \Vars^k$, where $\Vars^i = \{v^i \mid v \in \Vars\}$ denotes the variables associated with the $i$th copy.
For example, a common composition is a \emph{lock-step} composition
in which the copies execute simultaneously. The resulting composed transition system
$\TScompnof = (\Scompnof, \Tcompnof, \Fcompnof)$ is defined such that $\Scompnof
= \States \times \ldots \times \States$, $\Fcompnof = \bigwedge_{i=1}^k
\Terminal(\Vars^i)$ and $\Tcompnof = \bigwedge_{i=1}^k \Tr(\Vars^j,
{\Vars^j}')$. Note that $\Tcompnof$ is defined over $\Varscomp \uplus
{\Varscomp}'$ (as usual). Then, the $k$-safety property $\prop{\preS}{\postS}$
is satisfied by $\TS$ if and only if an ordinary safety property
$\prop{\preS}{\postS}$ is satisfied by $\TScompnof$. More general notions of
\emph{self composition} are investigated in \Cref{sec:problem}.

\section{Inferring Self Compositions for Restricted Languages of Inductive  Invariants} \label{sec:problem}

Any self-composition is sufficient for reducing $k$-safety to safety, e.g.,
lock-step, sequential, synchronous, asynchronous, etc. However, the choice of
the self-composition used determines the difficulty of the resulting safety
problem. Different self composed programs would require different inductive
invariants, some of which cannot be expressed in a given logical language.

In this section, we formulate the problem of inferring a self composition function such that the obtained self composed program may be verified with a given language of inductive invariants. We are, therefore, interested in inferring both the self composition function and the inductive invariant for verifying the resulting self composed program.
We start by formulating the kind of self compositions that we consider.

In the sequel, we fix a transition system $\TS = (\States, \Trans, \Terminal)$ with a set of variables $\Vars$.

\subsection{Semantic Self Composition}
\label{sec:SC-func}
Roughly speaking, a $k$ self composition of $\TS$ consists of $k$ copies of $\TS$ that execute together in some order, where steps may interleave or be performed simultaneously.
The order is determined by a self composition function, which may also be viewed as a scheduler that is responsible for scheduling a subset of the copies in each step. We consider \emph{semantic} compositions in which the order may depend on the \emph{states} of the different copies, as well as the correlations between them (as opposed to \emph{syntactic} compositions that only depend on the control locations of the copies, but may not depend on the values of other variables):

\begin{definition}[Semantic Self Composition Function]
A \emph{semantic $k$ self composition function} ($k$-composition function for short) is a function  $\SCfunc: \States^k \to \powerset{\{1..k\}}$, mapping each $k$-state to %
a \emph{nonempty} set of copies that are to participate in the next step of the
self composed program\footnote{We consider \emph{memoryless} composition
  functions. Compositions that depend on the history of the (joint) execution
  are supported via  ghost state added to the program to track the history.}.
\end{definition}

We represent a $k$-composition function
$\SCfunc$ by a set of logical conditions, with a condition $\Cond_M$ for every nonempty subset $M  \subseteq \{1..k\}$  of the copies. %
For each such $M \subseteq \{1..k\}$, the condition $\Cond_M$ is defined over $\Varscomp = \Vars^1 \uplus \ldots \uplus \Vars^k$, and hence it represents a set of $k$-states, with the meaning that all the $k$-states that satisfy $\Cond_M$ are mapped to $M$ by $\SCfunc$:
\vspace{-0.3cm}
\[\vspace{-0.2cm}
\SCfunc(\state_1,\ldots,\state_k) = M  \ \mbox{ if and only if } \ (\state_1,\ldots,\state_k) \models \Cond_{M}.
\]
To ensure that the function is well defined, we require that $(\bigvee_{M} \Cond_M) \equiv \true$, which ensures that every $k$-state satisfies at least one of the conditions. We also require that for every $M_1 \neq M_2$, $\Cond_{M_1} \wedge \Cond_{M_2} \equiv \false$, hence every $k$-state satisfies at most one condition. Together these requirements ensure that the conditions induce a partition of the set of all $k$-states.
In the sequel, we identify a $k$-composition function $\SCfunc$ with its
symbolic representation via  conditions $\{\Cond_M\}_M$ and use them
interchangeably.

\begin{definition}[Composed Program]
\label{def:comp-prog-semantic} \label{def:comp-program-logic}
Given a $k$-composition function $\SCfunc$, represented via conditions $\Cond_M$ for every nonempty set $M \subseteq \{1..k\}$, we define the \emph{$k$ self composition} of $\TS$ to be the transition system
$\TScomp = (\Scomp, \Tcomp,\Fcomp)$ over variables $\Varscomp= \Vars^1 \uplus \ldots \uplus \Vars^k$ defined as follows:  $\Fcomp = \bigwedge_{i=1}^k \Terminal^i$, where $\Terminal^i = \Terminal(\Vars^i)$, and
\vspace{-0.2cm}
\[
\vspace{-0.2cm}
\Tcomp = \bigvee_{\emptyset \neq M \subseteq \{1..k\}} \left(\Cond_M \wedge \varphi_M \right) \quad \mbox{  where } \quad \varphi_M = \bigwedge_{j \in M} \Tr(\Vars^j, {\Vars^j}') \wedge \bigwedge_{j \not\in M} \Vars^j = {\Vars^j}'
\]
\end{definition}
Thus, in $\TScomp$, the set of states consists of $k$-states ($\Scomp = \States \times\ldots\times \States$), the terminal states are $k$-states in which all the individual states are terminal, and the transition relation includes a transition from $(\state_1,\ldots, \state_k)$ to $(\state_1',\ldots, \state_k')$ if and only if $\SCfunc(\state_1,\ldots, \state_k) = M$ and
\ifextended
\begin{align*}
(\forall i\in M. \ (\state_i, \state_i') \in \Trans) \wedge (\forall i\not\in M.\ \state_i = \state_i')
\end{align*}
\else
$(\forall i\in M. \ (\state_i, \state_i') \in \Trans) \wedge (\forall i\not\in M.\ \state_i = \state_i')$.
\fi
That is, every transition of $\TScomp$ corresponds to a simultaneous transition of a subset $M$ of the $k$ copies of $\TS$, where the subset is determined by the self composition function $\SCfunc$. If $\SCfunc(\state_1,\ldots,\state_k) = M$, then for every $i \in M$ we say that $i$ is \emph{scheduled} in $(\state_1,\ldots,\state_k)$.

\begin{example}
A $k$ self composition that runs the $k$ copies of $\TS$ sequentially, one after the other, corresponds to a $k$-composition function $\SCfunc$ defined by $\SCfunc(\state_1,\ldots,\state_k) = \{i\}$ where $i \in \{1..k\}$ is the minimal index of a non-terminal state in $\{\state_1,\ldots,\state_k\}$. If all states in $\{\state_1,\ldots,\state_k\}$ are terminal then $i =k$ (or any other index).
This is encoded as follows: for every $1 \leq i <k$, $\Cond_{\{i\}} = \neg \Terminal^i \wedge \bigwedge_{j<i} \Terminal^j$,
$\Cond_{\{k\}} = \bigwedge_{j<k} \Terminal^j$ and $\Cond_M = \false$ for every other $M \subseteq \{1..k\}$.
\end{example}

\begin{example}
The lock-step composition that runs the $k$ copies of $\TS$ %
synchronously
corresponds to a $k$-self composition function $\SCfunc$ defined by $\SCfunc(\state_1,\ldots,\state_k) = \{1,\ldots,k\}$, and encoded by $\Cond_{\{1,\ldots,k\}} = \true$ and $\Cond_M = \false$ for every other $M \subseteq \{1..k\}$.
\end{example}

In order to ensure soundness of a reduction of $k$-safety to safety via self composition, one has to require that the self composition function does not ``starve'' any copy of the transition system that is about to terminate if it continues to execute.
We refer to this requirement as \emph{fairness}.

\begin{definition}[Fairness]
A $k$-self composition function $\SCfunc$ is \emph{fair} if for every $k$ terminating executions $\pi^1,\ldots,\pi^k$ of $\TS$ there exists an execution $\picomp$ of $\TScomp$ such that for every copy $i \in \{1..k\}$, the projection of $\picomp$ to $i$ is $\pi^i$.
\end{definition}
Note that by the definition of the terminal states of $\TScomp$, $\picomp$ as above is guaranteed to be terminating.
We say that the $i$th copy \emph{terminates} in  $\picomp$ %
if $\picomp$ contains a $k$-state $(\state_1,\ldots,\state_k)$ such that
$\state_i \in \Terminal$.
Fairness may be enforced in a straightforward way by requiring that %
whenever $\SCfunc(\state_1,\ldots,\state_k)=M$, the set $M$ includes no index $i$ for which $\state_i \in \Terminal$, unless all have terminated.
Since we assume that terminal states may only transition to themselves, a weaker requirement that suffices to ensure fairness is that
$M$ includes at least one index $i$ for which $\state_i \not \in \Terminal$, unless there is no such index.

The following claim is now straightforward:

\begin{lemma} \label{thm:soundness}
Let $\TS$ be a transition system, $\prop{\preS}{\postS}$ a $k$-safety property, and $\SCfunc$ a fair $k$-composition function for $\TS$ and $\prop{\preS}{\postS}$. Then
\[
\TS \kmodels \prop{\preS}{\postS} \mbox{\ iff \ \ } \TScomp \models \prop{\preS}{\postS}.
\]
\end{lemma}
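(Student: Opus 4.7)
\medskip

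\noindent
\textbf{Proof plan.} The plan is to prove both directions by setting up a tight correspondence between executions of $\TScomp$ and $k$-tuples of executions of $\TS$ through the \emph{projection} operation: given $\picomp = (\state_1^0,\ldots,\state_k^0), (\state_1^1,\ldots,\state_k^1),\ldots$ of $\TScomp$, its projection onto copy $i$ is the sequence $\state_i^0, \state_i^1, \ldots$ with stuttering collapsed (i.e., removing consecutive duplicates). The key structural fact I would first establish as a sublemma is that this projection is always a valid execution of $\TS$, and conversely that a collection of $k$ executions of $\TS$ can be woven into an execution of $\TScomp$ that respects $\SCfunc$. This follows directly from \Cref{def:comp-prog-semantic}: each transition of $\TScomp$ is governed by some nonempty $M$ where copies $j\in M$ step according to $\Trans$ and copies $j\not\in M$ stutter.

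\medskip

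\noindent
\textbf{Forward direction} ($\TS \kmodels \prop{\preS}{\postS} \Rightarrow \TScomp \models \prop{\preS}{\postS}$). I would take an arbitrary terminating execution $\picomp$ of $\TScomp$ starting in $(\state_1^0,\ldots,\state_k^0) \models \preS$ and ending at a terminal $k$-state $(t_1,\ldots,t_k)$. By the definition of $\Fcomp = \bigwedge_i \Terminal^i$, each $t_i \in \Terminal$. Projecting $\picomp$ onto copy $i$ yields a terminating execution $\pi^i$ of $\TS$ starting at $\state_i^0$ and ending at $t_i$; here I use the sublemma on projections plus the fact that once $t_i$ appears in the projection, all subsequent entries of copy $i$ stay at $t_i$ (since $\SCfunc$'s output might still include $i$, but $\Terminal$ states only transition to themselves). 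Applying $\TS\kmodels\prop{\preS}{\postS}$ to $\pi^1,\ldots,\pi^k$ gives $(t_1,\ldots,t_k) \models \postS$, which is exactly what is needed.

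\medskip

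\noindent
\textbf{Backward direction} ($\TScomp \models \prop{\preS}{\postS} \Rightarrow \TS \kmodels \prop{\preS}{\postS}$). This is the direction that uses fairness. I would start with $k$ terminating executions $\pi^1,\ldots,\pi^k$ of $\TS$ starting in states $\state_1,\ldots,\state_k$ with $(\state_1,\ldots,\state_k)\models\preS$, ending at $t_1,\ldots,t_k$. By fairness of $\SCfunc$, there exists an execution $\picomp$ of $\TScomp$ whose projection onto copy $i$ is $\pi^i$. Since each $\pi^i$ eventually reaches $t_i\in\Terminal$, and terminal individual states $t_i$ persist in the projection, $\picomp$ eventually reaches the $k$-state $(t_1,\ldots,t_k)$, which lies in $\Fcomp$. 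Hence $\picomp$ is terminating, starts in $\preS$, and ends in $\Fcomp$. Applying $\TScomp \models \prop{\preS}{\postS}$ yields $(t_1,\ldots,t_k) \models \postS$, establishing $k$-safety.

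\medskip

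\noindent
\textbf{Main obstacle.} The routine machinery (the transition decomposition via $\Cond_M\wedge\varphi_M$, $\preS/\postS$ interpreted over $k$-states) goes through cleanly. The subtle point I expect to work hardest to justify is the handling of termination under projection and stuttering: (i) in the forward direction, that projecting a finite terminating trace of $\TScomp$ gives a trace of $\TS$ that genuinely terminates at the right $t_i$, rather than just containing $t_i$ somewhere in the middle, which relies on $\Terminal$ being a sink in $\TS$; and (ii) in the backward direction, that the execution $\picomp$ furnished by fairness is itself terminating in $\TScomp$, which again uses that individual terminal states are sinks together with the definition of $\Fcomp$ as the conjunction of all $\Terminal^i$.
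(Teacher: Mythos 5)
Your proof is correct and follows essentially the same route as the paper, which only sketches it: terminating executions of $\TScomp$ project to $k$ terminating executions of $\TS$ (using that terminal states are sinks), and fairness supplies the converse weaving, whose termination again follows from the sink property and the definition of $\Fcomp$. Your expanded treatment of the projection/termination subtleties fills in exactly the details the paper leaves implicit.
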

\begin{proof}[sketch]
Every terminating execution of $\TScomp$ corresponds to $k$ terminating executions of $\TS$. Fairness of $\SCfunc$ ensures that the converse also holds.
\end{proof}

To demonstrate the necessity of the fairness requirement, consider a (non-fair) self composition function $\SCfunc$ that maps every state to $\{1\}$. Then, regardless of what the actual transition system $\TS$ does, the resulting self composition $\TScomp$ satisfies every pre-post specification vacuously, as it never reaches a terminal state.

\begin{remark}
\label{rem:SC-rel}
While we require the conditions $\{\Cond_M\}_M$ defining a self composition function $\SCfunc$ to induce a partition of $\Scomp$ in order to ensure that $\SCfunc$ is well defined as a (total) function, the requirement may be relaxed in two ways. First, we may allow $\Cond_{M_1}$ and $\Cond_{M_2}$ to overlap. This will add more transitions and may make the task of verifying the composed program more difficult, but it maintains the soundness of the reduction. Second, it suffices that the conditions cover the set of \emph{reachable states} of the composed program rather than the entire state space.
These relaxations do not damage soundness.
Technically, this means that $\SCfunc$ represented by the conditions is a relation rather than a function. We still refer to it as a function and write $\SCfunc(\state_1,\ldots,\state_k) = M$ to indicate that $(\state_1,\ldots,\state_k) \models \Cond_M$, not excluding the possibility that $(\state_1,\ldots,\state_k) \models M'$ for $M' \neq M$ as well.
We note that as long as the language used to describe compositions is closed under Boolean operations, we can always extract from the conditions $\{\Cond_M\}_M$ a function $\SCfunc'$. This is done as follows:
\ifextended
\begin{itemize}
\item To prevent the overlap between conditions, determine an arbitrary total order $<$ on the sets $M \subseteq \{1..k\}$ and set $\Cond_M' := \Cond_M \wedge \bigwedge_{N < M} \neg \Cond_N$.
\item To ensure that the conditions cover the entire state space, set $\Cond_{\{1..k\}}' := \Cond_{\{1..k\}}' \vee \neg(\bigvee_M \Cond_M)$.
\end{itemize}
\else
First, to prevent the overlap between conditions, determine an arbitrary total order $<$ on the sets $M \subseteq \{1..k\}$ and set $\Cond_M' := \Cond_M \wedge \bigwedge_{N < M} \neg \Cond_N$.
Second, to ensure that the conditions cover the entire state space, set $\Cond_{\{1..k\}}' := \Cond_{\{1..k\}}' \vee \neg(\bigvee_M \Cond_M)$.
\fi
It is easy to verify that $\SCfunc'$ defined by $\{\Cond'_M\}_M$ is a total self composition function and that if $\SCfunc$ is fair, then so is $\SCfunc'$.
\end{remark}

\subsection{The Problem of Inferring Self Composition with Inductive Invariant}

\Cref{thm:soundness} states the soundness of the reduction of $k$-safety to ordinary safety. Together with the ability to verify safety by means of an inductive invariant, this leads to a verification procedure. However, while soundness of the reduction holds for \emph{any} self composition, an inductive invariant in a given language may exist for the  composed program resulting from some compositions but not from others. We therefore consider the self composition function and the inductive invariant together, as a pair, leading to the following definition.

\begin{definition} \label{def:comp-inv-pair}
Let $\TS$ be a transition system and $\prop{\preS}{\postS}$ a $k$ safety property. For a formula $\Inv$ over $\Varscomp$ and a self composition function $\SCfunc$ represented by conditions $\{\Cond_M\}_M$, we say that $(\SCfunc,\Inv)$ is a \emph{composition-invariant} pair for $\TS$ and $\prop{\preS}{\postS}$  if the following conditions hold:
\begin{itemize}
\item $\preS \implies \Inv$  \ \ (initiation of $\Inv$),
\item for every $\emptyset\neq M \subseteq \{1..k\}$, $\Inv \wedge \Cond_M \wedge \varphi_M \implies \Inv'$ (consecution of $\Inv$ for $\Tcomp$),
\item $\Inv  \implies \big((\bigwedge_{j=1}^k \Terminal^j) \rightarrow \postS \big)$ \ \ (safety of $\Inv$),
\item $\Inv \implies \bigvee_M \Cond_M$ \ \ \ ($\SCfunc$ covers the reachable states),
\item for every $\emptyset \neq M \subseteq \{1..k\}$, $\Cond_M \wedge (\bigvee_{j=1}^k \neg \Terminal^j) \implies \bigvee_{j\in M} \neg \Terminal^j$  \ \ \ ($f$ is fair).
\end{itemize}
\end{definition}

As commented in \Cref{rem:SC-rel}, we relax the requirement that $(\bigvee_M \Cond_M) \equiv \true$ to $\Inv \implies \bigvee_M \Cond_M$, thus ensuring that the conditions cover all the reachable states.
Since the reachable states of $\TScomp$ are determined by $\{\Cond_M\}_M$ (which define $\SCfunc$), this reveals the interplay between the self composition function and the inductive invariant.
Furthermore, we do not require that  $\Cond_{M_1} \wedge \Cond_{M_2} \equiv \false$ for $M_1 \neq M_2$, hence a $k$-state may satisfy multiple conditions.
As explained earlier, these relaxations do not damage soundness.
Furthermore, if we construct from $\SCfunc$ a self composition function $\SCfunc'$ as described in \Cref{rem:SC-rel}, $\Inv$ would be an inductive invariant for $\TScompprime$ as well.

\begin{lemma}
If there exists a composition-invariant pair $(\SCfunc,\Inv)$ for $\TS$ and $\prop{\preS}{\postS}$, then $\TS \kmodels \prop{\preS}{\postS}$.
\end{lemma}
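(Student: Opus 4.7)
The plan is to reduce to \Cref{thm:soundness}, which already gives the equivalence $\TS \kmodels \prop{\preS}{\postS}$ iff $\TScomp \models \prop{\preS}{\postS}$ whenever $\SCfunc$ is fair. Since the definition of a composition-invariant pair relaxes the partitioning requirements on the conditions $\{\Cond_M\}_M$, my first move will be to convert $\SCfunc$ into a genuine (total, deterministic) composition function $\SCfunc'$ using the construction from \Cref{rem:SC-rel}: fix an arbitrary total order on the nonempty subsets $M \subseteq \{1..k\}$, set $\Cond_M' := \Cond_M \wedge \bigwedge_{N<M} \neg \Cond_N$, and then patch the remaining uncovered states into $\Cond_{\{1..k\}}'$ by taking the disjunction with $\neg(\bigvee_M \Cond_M)$. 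This yields a self composition function $\SCfunc'$ in the sense of \Cref{def:comp-prog-semantic}.

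The next step is to verify that $\SCfunc'$ is fair and that $\Inv$ is an inductive invariant for the composed program $\TScompprime$ with respect to $\prop{\preS}{\postS}$. Fairness of $\SCfunc'$ follows from the fairness clause in \Cref{def:comp-inv-pair} for the sets that remain non-trivially defined (the fair condition is preserved under conjoining with $\bigwedge_{N<M} \neg \Cond_N$), and from the fact that the extra disjunct added to $\Cond_{\{1..k\}}'$ schedules all copies, which trivially satisfies the fairness condition. For $\Inv$, initiation and safety are immediate from the first and third bullets of \Cref{def:comp-inv-pair}. For consecution with respect to $\Tcompdoubleprime$—I mean $\Tcompprime$—I will use the covering clause $\Inv \implies \bigvee_M \Cond_M$: any reachable $k$-state satisfying $\Inv$ satisfies some $\Cond_M$, so the extra disjunct added to $\Cond_{\{1..k\}}'$ is not triggered on $\Inv$-states, and for the refined conditions $\Cond_M'$ we have $\Cond_M' \implies \Cond_M$, hence $\Inv \wedge \Cond_M' \wedge \varphi_M \implies \Inv \wedge \Cond_M \wedge \varphi_M \implies \Inv'$ by the consecution clause.

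With $\SCfunc'$ fair and $\Inv$ inductive for $\TScompprime$, standard safety reasoning gives $\TScompprime \models \prop{\preS}{\postS}$, and then \Cref{thm:soundness} applied to $\SCfunc'$ yields $\TS \kmodels \prop{\preS}{\postS}$. The only step that requires genuine care is the consecution check for $\SCfunc'$: one needs the covering clause $\Inv \implies \bigvee_M \Cond_M$ to rule out the ``dummy'' disjunct that was added to make $\SCfunc'$ total, since the original $\SCfunc$ gave us no guarantee about transitions from states outside $\bigvee_M \Cond_M$. Everything else is routine bookkeeping on the clauses of \Cref{def:comp-inv-pair}.
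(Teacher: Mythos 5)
Your proposal is correct and follows essentially the same route as the paper's own (sketched) argument: pass from the relaxed conditions to a total, fair composition function $\SCfunc'$ via the construction of \Cref{rem:SC-rel}, check that $\Inv$ remains an inductive invariant for the resulting composed program (with the covering clause $\Inv \implies \bigvee_M \Cond_M$ ruling out the added ``dummy'' disjunct on $\Inv$-states), and then conclude with \Cref{thm:soundness}. You simply spell out the bookkeeping that the paper leaves implicit, and you correctly identify the covering clause as the one nontrivial ingredient.
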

\ifextended
\begin{proof}[sketch]
If $(\SCfunc,\Inv)$ is a composition-invariant pair, then $\Inv$ is an inductive invariant for $\TScompprime$, where $\SCfunc'$ is a fair composition function defined as in \Cref{rem:SC-rel}. From \Cref{thm:soundness} we conclude that $\TS \kmodels \prop{\preS}{\postS}$.
\end{proof}
\fi

If we do not restrict the language in which $\SCfunc$ and $\Inv$ are specified, then the converse also holds.
However, in the sequel we are interested in the ability to verify $k$-safety with a given language, e.g., one for which the conditions of \Cref{def:comp-inv-pair} belong to a decidable fragment of logic and hence can be discharged automatically.

\begin{definition}[Inference in $\lang$]
Let $\lang$ be a logical language. %
The problem of inferring a composition-invariant pair in $\lang$ is defined as follows. The input is a transition system $\TS$ and a $k$-safety property $(\preS, \postS)$.
The output is a composition-invariant pair $(\SCfunc, \Inv)$ for $\TS$ and $(\preS, \postS)$ (as defined in \Cref{def:comp-inv-pair}), where $\Inv \in \lang$ and $\SCfunc$ is represented by conditions $\{\Cond_M\}_M$ such that $\Cond_M \in \lang$ for every $\emptyset \neq M \subseteq \{1..k\}$. If no such pair exists, the output is ``no solution''.
\end{definition}
When no solution exists, it does not necessarily mean that $\TS \not \kmodels \prop{\preS}{\postS}$. Instead, it may be that the language $\lang$ is simply not expressive enough.
Unfortunately, for expressive languages (e.g., quantified formulas or even quantifier free linear integer arithmetic), the problem of inferring an inductive invariant alone is already undecidable, making the problem of inferring a composition-invariant pair undecidable as well:

\begin{lemma} \label{lem:undecidability}
Let $\lang$ be closed under Boolean operations and under substitution of a variable with a value, and include equalities of the form $v=a$, where $v$ is a variable and $a$ is a value (of the same sort).
If the problem of inferring an inductive invariant in $\lang$ is undecidable, then so is the problem of inferring a composition-invariant pair in $\lang$.
\end{lemma}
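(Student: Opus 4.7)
The plan is to reduce inductive-invariant inference in $\lang$ to composition-invariant pair inference in $\lang$ using the degenerate case $k=1$, for which the two problems essentially coincide. Given an instance $(\TS, \prop{\preS}{\postS})$ of inductive-invariant inference, I would hand exactly the same input to the composition-invariant pair inference procedure, viewing $\prop{\preS}{\postS}$ as a $1$-safety property. This gives a many-one reduction (the identity on inputs) from the former problem to the latter.

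For $k=1$ the only nonempty subset of $\{1\}$ is $\{1\}$ itself, so a semantic composition function is specified by a single condition $\Cond_{\{1\}}$, and the composed program $\TScomp$ coincides with $\TS$: $\varphi_{\{1\}} = \Tr$ and $\Fcomp = \Terminal$, and the fairness clause is vacuous. Unfolding \Cref{def:comp-inv-pair} in this setting, a pair $(\SCfunc, \Inv)$ is composition-invariant iff $\preS \implies \Inv$, $\Inv \wedge \Cond_{\{1\}} \wedge \Tr \implies \Inv'$, $\Inv \implies (\Terminal \to \postS)$, and $\Inv \implies \Cond_{\{1\}}$. Combining the coverage clause with consecution yields $\Inv \wedge \Tr \implies \Inv'$, which together with initiation and safety is exactly the standard definition of $\Inv$ being an inductive invariant for $\TS$ and $\prop{\preS}{\postS}$.

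It remains to verify the two directions of the equivalence. For the forward direction, given an inductive invariant $\Inv \in \lang$, I would pair it with $\Cond_{\{1\}} := \true$; expressing $\true$ in $\lang$ is precisely where the hypotheses on $\lang$ enter the argument, since from any equality atom $v = a$ and closure under Boolean operations one obtains $\true \equiv (v{=}a) \vee \neg(v{=}a) \in \lang$. For the reverse direction, the $\Inv$ component of any composition-invariant pair in $\lang$ is, by the observation above, already an inductive invariant in $\lang$. Consequently, any decision procedure for composition-invariant pair inference in $\lang$ would yield one for inductive-invariant inference in $\lang$, so the assumed undecidability of the latter transfers to the former.

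There is no real combinatorial obstacle; the only delicate point is ensuring that the trivial composition function lies in $\lang$, which is exactly what the closure and equality hypotheses on $\lang$ guarantee. (The substitution hypothesis is not needed for the $k=1$ reduction and would only come into play if one preferred a reduction through $k \geq 2$ by padding with a dummy copy and renaming variables accordingly.)
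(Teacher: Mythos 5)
Your reduction stands or falls on whether the degenerate case $k=1$ counts as an instance of the composition-invariant inference problem, and that is exactly where it diverges from what the lemma is meant to establish. Formally, nothing in the paper's definitions forbids $k=1$, and in that case your unfolding of \Cref{def:comp-inv-pair} is right: fairness is vacuous, the composed system is $\TS$ itself, and coverage plus consecution collapse to ordinary inductiveness, so the identity map is a reduction. But note that under this reading the lemma's hypotheses on $\lang$ become pointless --- you observe yourself that substitution is unused, and in fact you do not even need $\true\in\lang$, since one may simply take $\Cond_{\{1\}} := \Inv$, which makes coverage trivial and keeps consecution intact. The stated hypotheses (closure under Boolean operations and under substitution of variables by values, availability of equalities $v=a$) are precisely what a genuine two-copy reduction needs, which is a strong signal that the intended content of \Cref{lem:undecidability} is that undecidability persists for bona fide $k\geq 2$ self-composition instances, where the freedom to choose a composition function is a real degree of freedom that could conceivably have made the problem easier. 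Your argument says nothing about that case, and it also would not support the later PSPACE-hardness theorem, which the paper obtains by a reduction ``similar to the one used in the proof of \Cref{lem:undecidability}'', i.e., again a relational one.

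For comparison, the paper's proof: given $\TS$ and an ordinary property $\prop{\preS}{\postS}$, it builds $\TS^*$ over $\Vars\uplus\{b\}$ in which $b=\true$ runs the original transitions and $b=\false$ pins the state to fixed values $a_v$ (also made terminal), and poses the $2$-safety property whose first copy carries $\preS/\postS$ and whose second copy is constrained to $\neg b$ and the values $a_v$. One direction pairs an ordinary invariant $\Inv$ with the lock-step composition ($\Cond_{\{1,2\}}=\true$) and the invariant $b^1\wedge\Inv(\Vars^1)\wedge\neg b^2\wedge\bigwedge_{v\in\Vars} v^2=a_v$ --- this is where the equality atoms and Boolean closure are used; the converse substitutes $b^2\mapsto\false$ and $v^2\mapsto a_v$ into the invariant of an arbitrary composition-invariant pair (for whatever composition function it comes with) to recover an ordinary inductive invariant in $\lang$ --- this is where closure under substitution is used. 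A complete proof in the spirit of the lemma needs some such construction producing genuine $k\geq 2$ instances together with a projection of composition-invariant pairs back to ordinary invariants; the $k=1$ shortcut avoids exactly the part of the argument that carries the mathematical content.
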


\iflong
\begin{proof}
  We show a reduction from the ordinary invariant inference problem in $\lang$
  to the problem of inferring a composition-invariant pair in $\lang$. Given a
  transition system $\TS$ and an ordinary safety property $\prop{\preS}{\postS}$
  the reduction constructs a transition system $\TS^* = (\States^*, \Tr^*,
  \Terminal^*)$ over $\Vars^* = \Vars \uplus \{b\}$, where $b$ is a new Boolean
  variable such that when $b = \true$ the original transitions are taken and
  when $b = \false$ the systems remains in the same state, which is also added
  to the set of terminal states. Formally, for every $v \in \Vars$, let $a_v$
  be an arbitrary fixed value in the domain of $v$. For example, if $v$ is
  Boolean, $a_v = \false$.
    The reduction constructs
\begin{small}
\begin{align*}
\Tr^* &= (b \wedge \Tr \wedge b') \vee (\neg b \wedge (\bigwedge_{v\in \Vars} v' = a_v) \wedge \neg b')  &
\Terminal^*  &= \Terminal \vee (\neg b \wedge \bigwedge_{v\in \Vars} v' = a_v),
\end{align*}
\end{small}
and the following  $2$-safety property:
\begin{small}
\begin{align*} \preS^* &=  \left(  b^1 \wedge \preS(\Vars^1) \wedge \neg b^2 \wedge \bigwedge_{v\in \Vars} v^2 = a_v\right)  &
\postS^* &= \left(  b^1 \wedge \postS(\Vars^1) \wedge \neg b^2 \wedge \bigwedge_{v\in \Vars} v^2 = a_v\right) .
\end{align*}
\end{small}
That is, the first copy is ``initialized'' with $b = \true$ and with the original pre-condition and is required to terminate in a state that satisfies the original post-condition, while the second copy is initialized with $b =\false$, and with the value $a_v$ for each original variable, and is required to terminate in the same state. Clearly, if $\TS$ has an inductive invariant $\Inv$ for $\prop{\preS}{\postS}$, then $(\SCfunc, b^1 \wedge \Inv(\Vars^1)\wedge \neg b^2 \wedge \bigwedge_{v\in \Vars} v^2 = a_v)$ is a composition-invariant pair for $\TS^*$ and $\prop{\preS^*}{\postS^*}$, where $\SCfunc$ is defined by $\Cond_{\{1,2\}} = \true$ and $\Cond_M = \false$ for any other $M$, which is clearly in $\lang$. For the converse direction, if $\TS^*$ has a composition-invariant pair $(\SCfunc,\Inv^*)$ for $\prop{\preS^*}{\postS^*}$ then $\Inv$ obtained by substituting each positive occurrence of $b^2$ in $\Inv^*$ by $\false$, each negative occurrence of $b^2$ by $\true$ and each occurrence of $v^2$ by $a_v$ is an inductive invariant for $\TS$ and $\prop{\preS}{\postS}$.
\qed
\end{proof}

\else
\ifextended
\begin{proof}\label{app:proof}
  We show a reduction from the ordinary invariant inference problem in $\lang$
  to the problem of inferring a composition-invariant pair in $\lang$. Given a
  transition system $\TS$ and an ordinary safety property $\prop{\preS}{\postS}$
  the reduction constructs a transition system $\TS^* = (\States^*, \Tr^*,
  \Terminal^*)$ over $\Vars^* = \Vars \uplus \{b\}$, where $b$ is a new Boolean
  variable such that when $b = \true$ the original transitions are taken and
  when $b = \false$ the systems remains in the same state, which is also added
  to the set of terminal states. Formally, for every $v \in \Vars$, let $a_v$
  be an arbitrary fixed value in the domain of $v$. For example, if $v$ is
  Boolean, $a_v = \false$.
    The reduction constructs
\begin{small}
\begin{align*}
\Tr^* &= (b \wedge \Tr \wedge b') \vee (\neg b \wedge (\bigwedge_{v\in \Vars} v' = a_v) \wedge \neg b')  &
\Terminal^*  &= \Terminal \vee (\neg b \wedge \bigwedge_{v\in \Vars} v' = a_v),
\end{align*}
\end{small}
and the following  $2$-safety property:
\begin{small}
\begin{align*} \preS^* &=  \left(  b^1 \wedge \preS(\Vars^1) \wedge \neg b^2 \wedge \bigwedge_{v\in \Vars} v^2 = a_v\right)  &
\postS^* &= \left(  b^1 \wedge \postS(\Vars^1) \wedge \neg b^2 \wedge \bigwedge_{v\in \Vars} v^2 = a_v\right) .
\end{align*}
\end{small}
That is, the first copy is ``initialized'' with $b = \true$ and with the original pre-condition and is required to terminate in a state that satisfies the original post-condition, while the second copy is initialized with $b =\false$, and with the value $a_v$ for each original variable, and is required to terminate in the same state. Clearly, if $\TS$ has an inductive invariant $\Inv$ for $\prop{\preS}{\postS}$, then $(\SCfunc, b^1 \wedge \Inv(\Vars^1)\wedge \neg b^2 \wedge \bigwedge_{v\in \Vars} v^2 = a_v)$ is a composition-invariant pair for $\TS^*$ and $\prop{\preS^*}{\postS^*}$, where $\SCfunc$ is defined by $\Cond_{\{1,2\}} = \true$ and $\Cond_M = \false$ for any other $M$, which is clearly in $\lang$. For the converse direction, if $\TS^*$ has a composition-invariant pair $(\SCfunc,\Inv^*)$ for $\prop{\preS^*}{\postS^*}$ then $\Inv$ obtained by substituting each positive occurrence of $b^2$ in $\Inv^*$ by $\false$, each negative occurrence of $b^2$ by $\true$ and each occurrence of $v^2$ by $a_v$ is an inductive invariant for $\TS$ and $\prop{\preS}{\postS}$.
\qed
\end{proof}
\fi
\fi
For example, linear integer arithmetic satisfies the conditions of the lemma.
This motivates us to restrict the languages of inductive invariants. Specifically, we consider languages defined by a finite set of predicates.
We consider \emph{relational} predicates, defined over $\Varscomp = \Vars^1 \uplus \ldots \uplus \Vars^k$.
For a finite set of predicates $\Preds$, we define $\langof{\Preds}$ to be the set of all formulas obtained by Boolean combinations of the predicates in $\Preds$.

\begin{definition}[Inference using predicate abstraction]
\label{def:inf-problem}
The problem of inferring a predicate-based composition-invariant pair is defined as follows. The input is a transition system $\TS$, a $k$-safety property $(\preS, \postS)$, and a finite set of predicates $\Preds$.
The output is the solution to the problem of inferring a composition-invariant pair for $\TS$ and $(\preS, \postS)$ in $\langof{\Preds}$.
\end{definition}

\begin{remark}
It is possible to decouple the language used for expressing the self composition function from the language used to express the inductive invariant.
Clearly, different sets of predicates (and hence languages) can be assigned to the self composition function and to the inductive invariant. However,
since inductiveness is defined with respect to the transitions of the composed system, which are in turn defined by the self composition function,
if the language defining $\SCfunc$ is not included in the language defining $\Inv$, the conditions $\Cond_M$ themselves would be over-approximated when checking the requirements of \Cref{def:comp-inv-pair} and therefore would incur a precision loss.
For this reason, we use the same language for both.
\end{remark}

Since the problem of invariant inference in $\langof{\Preds}$ is PSPACE-hard~\cite{DBLP:conf/cade/LahiriQ09},
a  reduction from the problem of inferring inductive invariants to the problem of inferring composition-invariant pairs (similar to the one used in the proof of \Cref{lem:undecidability}) shows
that  composition-invariant inference in $\langof{\Preds}$ is also PSPACE-hard:
\begin{theorem}
Inferring a predicate-based composition-invariant pair is PSPACE-hard.
\end{theorem}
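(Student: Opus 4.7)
The plan is to reduce from the PSPACE-hard problem of inferring a predicate-based inductive invariant~\cite{DBLP:conf/cade/LahiriQ09}, adapting the reduction used in the proof of \Cref{lem:undecidability}. Given an instance $(\TS, \prop{\preS}{\postS}, \Preds)$ of ordinary predicate-based inductive invariant inference, I would construct exactly the transition system $\TS^*$ and $2$-safety property $\prop{\preS^*}{\postS^*}$ from the proof of \Cref{lem:undecidability}, paired with a carefully chosen finite set of relational predicates $\Preds^*$.

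The crucial new ingredient is the choice of $\Preds^*$. Since $\langof{\Preds}$ does not have general closure under substitution of variables by values, I would define
\[
\Preds^* \;=\; \{\, p(\Vars^1) : p \in \Preds\,\} \;\cup\; \{\, b^1,\ b^2\,\} \;\cup\; \{\, v^2 = a_v : v \in \Vars\,\},
\]
whose size is polynomial in $|\Preds|$ and $|\Vars|$, yielding a polynomial-time reduction. For the forward direction, any $\Inv \in \langof{\Preds}$ lifts to $\Inv^* = b^1 \wedge \Inv(\Vars^1) \wedge \neg b^2 \wedge \bigwedge_{v \in \Vars} v^2 = a_v$, which lies in $\langof{\Preds^*}$ by construction, and the trivial lock-step composition given by $\Cond_{\{1,2\}} = \true$ and $\Cond_M = \false$ for other $M$ is also in $\langof{\Preds^*}$; as in \Cref{lem:undecidability}, this yields a composition-invariant pair for $\TS^*$ and $\prop{\preS^*}{\postS^*}$.

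For the backward direction, given any composition-invariant pair $(\SCfunc, \Inv^*)$ with $\Inv^* \in \langof{\Preds^*}$, I would apply the same substitution as in the proof of \Cref{lem:undecidability}, replacing $b^2 \mapsto \false$, each $v^2 \mapsto a_v$, and $b^1 \mapsto \true$. The key observation is that under this substitution every predicate in $\Preds^*$ collapses: $b^1, b^2$ become constants, each equality $v^2 = a_v$ becomes $\true$, and each $p(\Vars^1)$ stays as a predicate that, after renaming $\Vars^1 \mapsto \Vars$, equals some $p \in \Preds$. Hence the resulting Boolean combination lies in $\langof{\Preds}$, and the soundness argument inherited from \Cref{lem:undecidability} shows it is an inductive invariant for $\TS$ and $\prop{\preS}{\postS}$.

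The main obstacle is precisely ensuring that the substitution sends $\langof{\Preds^*}$ back into $\langof{\Preds}$: this is why $\Preds^*$ must be restricted so that its copy-$2$ predicates are either Boolean literals of $b^2$ or equalities $v^2 = a_v$ that collapse under the substitution, rather than arbitrary relational predicates that could mix $\Vars^1$ and $\Vars^2$ and produce formulas outside $\langof{\Preds}$. Once this language-preservation property is verified, PSPACE-hardness transfers directly from the original invariant-inference problem to composition-invariant pair inference.
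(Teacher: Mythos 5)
Your proposal is correct and follows essentially the same route as the paper, which establishes the theorem by the same reduction as in \Cref{lem:undecidability} (from PSPACE-hard invariant inference in $\langof{\Preds}$~\cite{DBLP:conf/cade/LahiriQ09}), only sketched rather than spelled out. Your explicit choice of $\Preds^*$ and the check that the copy-2 predicates collapse under the substitution is exactly the detail needed to make that sketch go through in the predicate-based setting.
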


\section{Algorithm for Inferring Composition-Invariant Pairs}

\SetKwFunction{fixFunc}{Modify\_SC}
\SetKwFunction{BMC}{BMC}
\SetKwFunction{safe}{Abs\_Reach}
\SetKwFunction{lastStep}{Last\_Step}
\SetKwFunction{Generalize}{Generalize}
\SetKwFunction{noValue}{All\_Excluded\_Or\_Starving}
\SetKwFunction{remLast}{Remove\_Last\_Step}
\begin{figure}[t]
\begin{algorithm}[H]
\footnotesize
\DontPrintSemicolon
\setcounter{AlgoLine}{0}
\LinesNumbered
\BlankLine
$\SCfunc \gets \text{lockstep}$ \label{line:lockstep}, $E \gets \emptyset$, $\Unreach \gets \false$ \;
\While { (true) } {
  $(res, \Inv, \cex) \gets \safe(\Preds,\TScomp,\preS,\postS,\Unreach)$ \label{line:checksafe} \; %
  \lIf {$res = \text{safe}$ }{
     \Return $(\SCfunc,\toPred{\Inv})$  \label{line:ret-true} %
   }
     $(\astate,M) \gets \lastStep(\cex)$ \;
     $E \gets E \cup \{(\astate,M)\}$ \label{line:add-e} \;
     \While {($\noValue(\astate, E)$) } {
        $\Unreach \gets \Unreach \vee \astate$ \label{line:add-unreach} \;
        \lIf{ $\Unreach \wedge \toBool{\varphi_{\preS}} \not\equiv \false$} {
           \Return \text{``no solution in $\langof{\Preds}$''} \label{line:ret-false} %
        }
           $\cex \gets \remLast(\cex)$ \;
           $(\astate,M) \gets \lastStep(\cex)$ \;
           $E \gets E \cup \{(\astate,M)\}$ \label{line:add-more-e} \;
     }
     $\SCfunc \gets \fixFunc(\SCfunc,\astate,E)$ \label{line:modify} \;
}
\caption{\Pdsc: Property-Directed Self-Composition. %
\label{alg:infer-2}}
\end{algorithm}
\end{figure}

In this section, we present %
Property Directed Self-Composition, \Pdsc for short --- our algorithm for tackling the composition-invariant inference problem for languages of predicates (\Cref{def:inf-problem}).
Namely, given a transition system $\TS$, a $k$-safety property $\prop{\preS}{\postS}$ and a finite set of predicates $\Preds$, we address the problem of finding a pair $(\SCfunc, \Inv$), where $\SCfunc$ is a self composition function and $\Inv$ is an inductive invariant for the composed transition system $\TScomp$ obtained from $\SCfunc$, and both of them are in $\langof{\Preds}$, i.e., defined by Boolean combinations of the predicates in $\Preds$.

We rely on the property that a transition system (in our case $\TScomp$) has an inductive invariant in $\langof{\Preds}$ if and only if its abstraction obtained using $\Preds$ is safe.
This is because, the set of reachable abstract states is the strongest set expressible in $\langof{\Preds}$ that satisfies initiation and consecution.
Given $\TScomp$, this allows us to use predicate abstraction to either obtain an inductive invariant in $\langof{\Preds}$ for $\TScomp$ (if the abstraction of $\TScomp$ is safe) or determine that no such inductive invariant exists (if an abstract counterexample trace is obtained). The latter indicates that a different self composition function needs to be considered.
A naive realization of this idea gives rise to an iterative algorithm that starts from an arbitrary initial composition function and in each iteration computes a new composition function. %
At the worst case such an algorithm enumerates all self composition functions defined in $\langof{\Preds}$, i.e., has time complexity $O(2^{2^{|\Preds|}})$. Importantly, we observe that, when no inductive invariant exists for some composition function, we can use the abstract counterexample trace returned in this case to (i)~generalize and eliminate multiple composition functions, and (ii)~identify that some abstract states must be unreachable if there is to be a composition-invariant pair, i.e., we ``block'' states in the spirit of \emph{property directed reachability}~\cite{ic3,pdr}.
This leads to the algorithm depicted in \Cref{alg:infer-2} whose worst case time complexity is
$2^{O(|\Preds|)}$.
Next, we explain the algorithm in detail.

\subsubsection*{Finding an inductive invariant for a given composition function using predicate abstraction.}
We use predicate abstraction~\cite{DBLP:conf/cav/GrafS97,DBLP:conf/cav/SaidiS99} to check if a given candidate composition function has a corresponding inductive invariant.
This is done as follows.
The abstraction of $\TScomp$ using $\Preds$, denoted $\A{\Preds}(\TScomp)$, is a transition system $(\AStates, \ATr)$ %
defined over variables $\mathcal{B}$, where $\mathcal{B} = \{b_\pred \mid \pred \in \Preds\}$ (we omit the terminal states).
$\AStates = \{0,1\}^{\mathcal{B}}$, i.e., each abstract state corresponds to a valuation of the Boolean variables representing $\Preds$.
An abstract state $\astate \in \AStates$ represents the following set of states of $\TScomp$:
\[
\gamma(\astate) = \{ \scomp \in \Scomp \mid \forall \pred \in \Preds. \ \scomp \models \pred \Leftrightarrow \astate(b_{\pred}) = 1\}
\]
We extend $\gamma$ to sets of states and to formulas representing sets of states in the usual way.
The abstract transition relation %
is defined as usual:
\[
\ATr = \{(\astate_1, \astate_2) \mid \exists \scomp_1 \in \gamma(\astate_1) \ \exists  \scomp_2 \in \gamma(\astate_2).\ (\scomp_1,\scomp_2) \in \Tcomp\}
\]
Note that the set of abstract states %
in $\A{\Preds}(\TScomp)$ does \emph{not} depend on $\SCfunc$.

\begin{notation}
We sometimes refer to an abstract state $\astate  \in \AStates$ as the formula $\bigwedge_{\astate(b_{\pred}) = 1}  b_{\pred} \wedge
\bigwedge_{\astate(b_{\pred}) = 0} \neg b_{\pred}$.
For a formula $\psi \in \langof{\Preds}$, we denote by $\toBool{\psi}$ the result of substituting each $\pred \in \Preds$ in $\psi$ by the corresponding Boolean variable $b_{\pred}$.
For the opposite direction, given a formula $\psi$ over $\mathcal{B}$, we denote by $\toPred{\psi}$ the formula in $\langof{\Preds}$ resulting from substituting each $b_{\pred} \in \mathcal{B}$ in $\psi$ by $\pred$. Therefore, $\toPred{\psi}$ is a symbolic representation of $\gamma(\psi)$.
\end{notation}

Every set defined by a formula $\psi \in \langof{\Preds}$ is precisely represented by $\toBool{\psi}$ in the sense that
$\gamma(\toBool{\psi})$ is equal to the set of states defined by $\psi$, i.e., $\toBool{\psi}$ is a precise abstraction of $\psi$.
For simplicity, we assume that the termination conditions as well as the pre/post specification can be expressed precisely using the abstraction,
in the following sense:

\begin{definition}
$\Preds$ is \emph{adequate} for $\TS$ and $\prop{\preS}{\postS}$ if there exist $\varphi_{\preS}, \varphi_{\postS}, \varphi_{\Terminal^i} \in \langof{\Preds}$ such that $\varphi_{\preS} \equiv \preS$, $\varphi_{\postS} \equiv \postS$ and $\varphi_{\Terminal^i} \equiv \Terminal^i$ (for every copy $i \in \{1..k\}$).
\end{definition}

\ignore{
Adequacy ensures that $\Fcomp = \bigwedge_{i=1}^k \Terminal^i$
is precisely represented by $\bigwedge_{i=1}^k \toBool{\varphi_{\Terminal^i}}$. %
This implies that (the symbolic representation of) $\AF$ can be defined by $\bigwedge_{i=1}^k \toBool{\varphi_{\Terminal^i}}$ and $\gamma(\AF) = \Fcomp$.
Together with the precise representation of $\preS$ and $\postS$, this ensures that no precision loss is incurred by the initiation and safety requirements of \Cref{def:comp-inv-pair}, and they can be established using the abstraction.
Recall that, in addition, the predicates used to define $\SCfunc$ are the same as those used to construct the abstraction, i.e., each condition $\Cond_M$ in the definition of $\SCfunc$ is in $\langof{\Preds}$.
Thus, the requirements on $\SCfunc$ can also be established using the abstraction.
We therefore get the following:
}

The following lemma provides the foundation for our algorithm:

\begin{lemma}
\label{lem:abs-check}
Let $\TS$ be a transition system, $\prop{\preS}{\postS}$ a $k$ safety property, and $\Preds$ a finite set of predicates adequate for $\TS$ and $\prop{\preS}{\postS}$.
For a self composition function $\SCfunc$ defined via conditions $\{\Cond_M\}_M$ in $\langof{\Preds}$, there exists an inductive invariant $\Inv$ in $\langof{\Preds}$ such that $(\SCfunc,\Inv)$ is a composition-invariant pair for $\TS$ and $\prop{\preS}{\postS}$ if and only if the following three conditions hold:
\begin{compactitem} %
\item[{\bf S1}] All reachable states of $\A{\Preds}(\TScomp)$ from $\toBool{\varphi_{\preS}}$ satisfy $(\bigwedge_{i=1}^k \toBool{\varphi_{\Terminal^i}})\rightarrow \toBool{\varphi_{\postS}}$,
\item[{\bf S2}] All reachable states of $\A{\Preds}(\TScomp)$  from $\toBool{\varphi_{\preS}}$ satisfy $\bigvee_M \toBool{\Cond_M}$, and
\item[{\bf S3}] For every $\emptyset\neq M \subseteq \{1..k\}$, $\toBool{\Cond_M} \wedge (\bigvee_{j=1}^k \neg \toBool{\varphi_{\Terminal^j}})\implies \bigvee_{j\in M} \neg \toBool{\varphi_{\Terminal^j}}$.
\end{compactitem}
Furthermore, if the conditions hold, then the symbolic representation of the set of abstract states of $\A{\Preds}(\TScomp)$ reachable from $\toBool{\varphi_{\preS}}$ is a formula $\Inv$ over $\mathcal{B}$ such that $(\SCfunc, \toPred{\Inv})$ is a composition-invariant pair for $\TS$ and $\prop{\preS}{\postS}$.
\end{lemma}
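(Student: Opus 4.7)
The natural candidate invariant is $\Inv^{\star} = \toPred{R}$, where $R$ is the (symbolic representation of the) set of abstract states of $\A{\Preds}(\TScomp)$ reachable from $\toBool{\varphi_{\preS}}$. The plan is to prove both implications against this distinguished choice, which will establish the ``furthermore'' clause along the way. The engine of the argument is the standard best-abstraction property of $\A{\Preds}$: if $\psi\in\langof{\Preds}$, then $\gamma(\toBool{\psi})$ equals the set defined by $\psi$ (this is what ``precise abstraction'' in $\langof{\Preds}$ buys us), and if $\psi_1,\psi_2\in\langof{\Preds}$, then $\psi_1\wedge\Tcomp\implies\psi_2'$ is equivalent to $\toBool{\psi_1}\wedge\ATr\implies\toBool{\psi_2}'$.

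For the $(\Leftarrow)$ direction, assume S1--S3. I would verify each of the five requirements of \Cref{def:comp-inv-pair} for $(\SCfunc,\Inv^{\star})$. Initiation ($\preS\implies\Inv^{\star}$) holds because $\toBool{\varphi_{\preS}}\subseteq R$ by construction and $\varphi_{\preS}\equiv\preS$ by adequacy. Consecution follows because $R$ is closed under $\ATr$ and, by the best-abstraction property above applied to each disjunct $\Cond_M\wedge\varphi_M$, abstract closure is equivalent to concrete inductiveness for sets denoted by $\langof{\Preds}$ formulas. Safety is immediate from S1 together with adequacy of $\varphi_{\postS}$ and $\varphi_{\Terminal^i}$. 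Coverage $\Inv^{\star}\implies\bigvee_M\Cond_M$ is exactly S2 after translating back from $\mathcal{B}$ to $\Preds$. Finally, fairness of $\SCfunc$ is S3 verbatim, since fairness is a purely syntactic property of the conditions $\{\Cond_M\}_M$ and does not reference $\Inv$.

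For the $(\Rightarrow)$ direction, suppose $(\SCfunc,\Inv)$ is a composition-invariant pair with $\Inv\in\langof{\Preds}$. The key step is to show $R\subseteq\toBool{\Inv}$: initiation of $\Inv$ gives $\toBool{\varphi_{\preS}}\subseteq\toBool{\Inv}$, and consecution of $\Inv$ for $\Tcomp$ lifts (again by best-abstraction) to closure of $\toBool{\Inv}$ under $\ATr$, so every state abstractly reachable from $\toBool{\varphi_{\preS}}$ lies in $\toBool{\Inv}$. With this containment in hand, S1 follows from the safety clause of $\Inv$, S2 follows from the coverage clause $\Inv\implies\bigvee_M\Cond_M$, and S3 is again just the fairness clause restated at the Boolean level.

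The main obstacle I anticipate is justifying cleanly the lifting between concrete consecution over $\Tcomp$ and abstract closure under $\ATr$ for formulas in $\langof{\Preds}$, because $\Tcomp$ is a disjunction $\bigvee_M(\Cond_M\wedge\varphi_M)$ in which only the $\Cond_M$ part is guaranteed to be precisely abstracted while $\varphi_M$ contains primed variables from the underlying transition relation $\Tr$. I would handle this by observing that $\ATr$ is defined as the exact existential abstraction of $\Tcomp$, so for any $\psi_1,\psi_2\in\langof{\Preds}$ the Hoare-style triple $\{\psi_1\}\Tcomp\{\psi_2\}$ is equivalent to $\{\toBool{\psi_1}\}\ATr\{\toBool{\psi_2}\}$; applying this to $\psi_1=\Inv\wedge\Cond_M$ (still in $\langof{\Preds}$ since $\Cond_M$ is) and $\psi_2=\Inv$ closes the gap. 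All other steps are routine once this equivalence is isolated.
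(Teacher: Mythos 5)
Your proposal is correct and follows essentially the same route as the paper: both directions hinge on the correspondence (via $\gamma$ and adequacy) between concrete states satisfying $\langof{\Preds}$ formulas and abstract states satisfying their $\toBool{\cdot}$ versions, which makes inductiveness transfer exactly between $\Tcomp$ and $\ATr$ for pre/post formulas in $\langof{\Preds}$. Your choice of instantiating the $(\Leftarrow)$ direction with the reachable abstract set $R$ (rather than, as the paper does, an arbitrary abstract invariant conjoined with one establishing {\bf S2}) is a minor streamlining that also yields the ``furthermore'' clause directly.
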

\ifextended
\begin{proof}
The proof relies on the following statement, denoted by $(*)$: for a formula $\varphi$ in $\langof{\Preds}$ and an abstract state $\astate$, for every $\scomp\in\gamma(\astate)$ it holds that $\scomp\models\varphi \Leftrightarrow \astate\models \toBool{\varphi}$
(which follows by induction on the structure of a formula in $\langof{\Preds}$, relying on the definition of $\gamma(\astate)$). In particular, this implies that for a formula $\psi$ over $\mathcal{B}$, it holds that $\scomp\models\toPred{\psi} \Leftrightarrow \astate\models {\psi}$ whenever $\scomp\in\gamma(\astate)$.

($\Rightarrow$) Let $\TS$, $\prop{\preS}{\postS}$ and $\Preds$ be as described, and let ($\SCfunc,\Inv$) be a composition-invariant pair for $\TS$ and $\prop{\preS}{\postS}$ in $\langof{\Preds}$. We first show that every (abstract) state that is reachable from $\toBool{\varphi_{\preS}}$ in $\A{\Preds}(\TScomp)$ satisfies $\toBool{\Inv}$. Let $\astate$ be such a reachable state. Then there exists an abstract trace $\astate_1,\ldots,\astate_m$ such that $\astate_1\models\toBool{\varphi_{\preS}}$, $\astate_m = \astate$ and $(\astate_i,\astate_{i+1})\in \ATr$ for every $1 \leq i < m$. Consider a concrete state $\scomp_1$ of $\TScomp$ such that $\scomp_1 \in \gamma(\astate_1)$, then $\astate_1\models\toBool{\varphi_{\preS}}$ and from $(*)$ we get $\scomp_1\models \varphi_{\preS}$. From the definition of a composition-invariant pair (\Cref{def:comp-inv-pair}) we get that  $\scomp_1\models\Inv$ (initiation). Since $\Inv$ is in $\langof{\Preds}$ we get from $(*)$ that also $\astate_1\models \toBool{\Inv}$. For $\astate_2$, the next state in the abstract trace, it also holds that $\astate_2\models \toBool{\Inv}$: since $(\astate_1,\astate_2)\in\ATr$, we know that there exist some $\scomp_a\in \gamma(\astate_1)$ and  $\scomp_b\in \gamma(\astate_2)$ such that $(\scomp_a,\scomp_b) \in \Tcomp$, using $(*)$ we get that $\scomp_a\models Inv$, the consecution of $\Inv$ implies $\scomp_b\models\Inv$ and from $(*)$ we get $\astate_2\models \toBool{\Inv}$. By induction over the length of the abstract trace we get that $\astate\models \toBool{\Inv}$. We now turn to show that conditions {\bf S1--S3} hold.
First, the safety of $\Inv$ for $\TScomp$ together with adequacy of $\Preds$ and $(*)$ imply that $\toBool{\Inv} \implies \big((\bigwedge_{j=1}^k \toBool{\Terminal^j}) \rightarrow \toBool{\postS} \big)$, and since all the reachable states of $\A{\Preds}(\TScomp)$ satisfy $\toBool{\Inv}$,  {\bf S1} follows.
Similarly, the covering requirement of $\SCfunc$ together with the property that $C_M$ %
is in $\langof{\Preds}$ for every $M$ and together with $(*)$ imply {\bf S2}.
Finally, {\bf S3} is implied directly from the fairness of $\SCfunc$ (\Cref{def:comp-inv-pair}).

($\Leftarrow$) Assume that for $\TS$, $\prop{\preS}{\postS}$, $\Preds$ and some composition function $\SCfunc$ as described, conditions {\bf S1--S3} hold.
Condition {\bf S1} ensures that $\A{\Preds}(\TScomp)$ satisfies the safety property $(\toBool{\varphi_{\preS}},\toBool{\varphi_{\postS}} )$, when we augment $\A{\Preds}(\TScomp)$ with a set of terminal states given by the formula $\bigwedge_{i=1}^k \toBool{\varphi_{\Terminal^i}}$.
Hence, there exists an inductive invariant $\Inv$ over $\mathcal{B}$ for $\A{\Preds}(\TScomp)$ and $(\toBool{\varphi_{\preS}},\toBool{\varphi_{\postS}} )$.
Furthermore, condition {\bf S2} ensures that there exists such $\Inv$ for which $\Inv \implies \bigvee_M \toBool{\Cond_M}$ (for example, such $\Inv$ may be obtained by conjoining the inductive invariant ensured by {\bf S1} with another inductive invariant that establishes {\bf S2}). %
To conclude the proof we show that ($\SCfunc,\toPred{\Inv}$) is a composition-invariant pair for $\TS$ and $\prop{\preS}{\postS}$, as defined in \Cref{def:comp-inv-pair}. %
First, initiation and safety of $\Inv$ with respect to $\A{\Preds}(\TScomp)$ and $\prop{\toBool{\varphi_{\preS}}}{\toBool{\varphi_{\postS}}}$, imply initiation and safety (respectively) of $\toPred{\Inv}$ with respect to $\TS$ and $\prop{\varphi_{\preS}}{\varphi_{\postS}}$ due to $(*)$ and adequacy of $\Preds$.
As for consecution of $\toPred{\Inv}$: for a pair of states $\scomp_1,\scomp_2$ in $\TScomp$ such that $(\scomp_1,\scomp_2)\in \Tcomp$, if $\scomp_1\in\gamma(\astate_1)$ and $\scomp_2\in\gamma(\astate_2)$, then $(\astate_1,\astate_2)\in \ATr$. Therefore, if $\scomp_1\models\toPred{\Inv}$ then $\astate_1\models {\Inv}$ (according to $(*)$), and from consecution of ${\Inv}$ in $\A{\Preds}(\TScomp)$ also $\astate_2\models {\Inv}$, and from $(*)$ we get $\scomp_2\models \toPred{\Inv}$ and conclude the consecution of $\toPred{\Inv}$ in $\TScomp$.
Similarly, for covering of $\SCfunc$: recall that $\Inv \implies \bigvee_M \toBool{\Cond_M}$, hence by $(*)$,
$\toPred{\Inv} \implies \bigvee_M {\Cond_M}$, i.e., $\SCfunc$ covers the states satisfying $\toPred{\Inv}$. Finally, the fairness of $\SCfunc$ follows from {\bf S3}.
\qed

\end{proof}
\fi

\Cref{alg:infer-2} starts from the lock-step self composition function
(\Cref{line:lockstep}), which is fair\footnote{Any fair self composition can be
  chosen as the initial one; we chose lock-step since it is a good starting point in many applications.},
  and constructs the next candidate
$\SCfunc$ such that condition {\bf S3} in \Cref{lem:abs-check} always holds (see
discussion of $\fixFunc$).
Thus, %
condition {\bf S3} need not be checked explicitly.

\Cref{alg:infer-2} checks whether conditions {\bf S1} and {\bf S2} hold for a given candidate composition function $\SCfunc$ by calling $\safe$ (\Cref{line:checksafe}) -- both checks are performed via a \mbox{(non-)reachability} check in $\A{\Preds}(\TScomp)$, checking whether a state violating $(\bigwedge_{i=1}^k \toBool{\varphi_{\Terminal^i}})\rightarrow \toBool{\varphi_{\postS}}$ or $\bigvee_M \toBool{\Cond_M}$ is reachable from $\toBool{\varphi_{\preS}}$.
\Cref{alg:infer-2} maintains the abstract states that are not in $\bigvee_M \toBool{\Cond_M}$ by the formula $\Unreach$ defined over $\mathcal{B}$, which is initialized to $\false$ (as  the lock-step composition function is defined for every state) and is updated in each iteration of \Cref{alg:infer-2} to include the abstract states violating $\bigvee_M \toBool{\Cond_M}$.
If no abstract state violating {\bf S1} or {\bf S2} is reachable, i.e., the conditions hold, then $\safe$ returns the (potentially overapproximated) set of reachable abstract states, represented by a formula $\Inv$ over $\mathcal{B}$.  In this case, by \Cref{lem:abs-check}, $(\SCfunc, \toPred{\Inv})$ is a composition-invariant pair (\Cref{line:ret-true}). Otherwise, an abstract counterexample trace is obtained.
(We can of course apply bounded model checking to check if the counterexample is real; we omit this check as our focus is on the case where the system is safe.)

\ignore{
If the trace exhibits violation of {\bf S1}, we use $\BMC$ on $\TScomp$ to determine whether the counterexample is real (\Cref{line:bmc}). If it is, the system violates its $k$ safety property, thus \Cref{alg:infer-2} terminates.
Otherwise, %
the search for a composition-invariant pair continues.
}

\begin{remark}
In practice, we do not construct $\A{\Preds}(\TScomp)$ explicitly.  Instead, we use the \emph{implicit predicate abstraction} approach~\cite{DBLP:conf/tacas/CimattiGMT14}.
\end{remark}

\subsubsection*{Eliminating self composition candidates based on abstract counterexamples.}

An abstract counterexample to conditions {\bf S1} or {\bf S2} indicates that the candidate composition function $\SCfunc$ has no corresponding $\Inv$.
Violation of {\bf S1} can only be resolved by changing $\SCfunc$ such that the abstract trace is no longer feasible.
Violation of {\bf S2} may, in principle, also be resolved by extending the definition of $\SCfunc$ such that it is defined for all the abstract states in the counterexample trace.

However, to prevent the need to explore both options,
our algorithm maintains the following invariant for every candidate self composition function $\SCfunc$ that it constructs:
\begin{claim}
Every abstract state that is \emph{not} in $\bigvee_M \toBool{\Cond_M}$ is not reachable w.r.t. the abstract composed program of \emph{any} composition function that is part of a composition-invariant pair for $\TS$ and $\prop{\preS}{\postS}$.
\end{claim}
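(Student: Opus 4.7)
The plan is to prove, by induction on the execution of \Pdsc, two invariants that are strong enough to jointly imply the claim:
$(I_1)$ every abstract state $\astate \models \Unreach$ is unreachable in $\A{\Preds}(\TS^{\SCfunc'})$ from $\toBool{\varphi_{\preS}}$ for \emph{every} composition function $\SCfunc'$ that is part of a composition-invariant pair for $\TS$ and $\prop{\preS}{\postS}$ in $\langof{\Preds}$; and
$(I_2)$ for every $(\astate, M) \in E$ there is an abstract state $\astate^+$ such that $(\astate,\astate^+)$ is realized by some concrete $\varphi_M$-transition and either $\astate^+$ violates $\textbf{S1}$ or $\astate^+ \models \Unreach$.
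Because $\fixFunc$ is arranged so that the domain of $\SCfunc$ excludes exactly the states currently in $\Unreach$, i.e.\ $\neg(\bigvee_M \toBool{\Cond_M^{\SCfunc}}) \equiv \Unreach$ on the abstract state space, $(I_1)$ is precisely the claim.

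For the inductive step, each addition $(\astate, M)$ to $E$ is obtained via $\lastStep$, so the immediate successor $\astate^+$ along the counterexample is realized by a concrete $\varphi_M$-transition from $\astate$; moreover $\astate^+$ is either the original bad terminal of the counterexample (which falsifies $\textbf{S1}$, or violates $\textbf{S2}$ and thus already satisfies $\Unreach$), or a state placed into $\Unreach$ earlier in the current inner loop. This preserves $(I_2)$. For an addition of $\astate$ to $\Unreach$: at this point $\noValue$ has fired, so every move $M$ from $\astate$ that respects fairness satisfies $(\astate, M) \in E$. I would then argue by contradiction: if $\astate$ were reachable in $\A{\Preds}(\TS^{\SCfunc'})$ for some valid pair $(\SCfunc', \Inv)$, then $\astate \models \toBool{\Inv}$, and using $\Inv \implies \bigvee_M \toBool{\Cond_M^{\SCfunc'}}$ I would pick $M$ with $\astate \models \toBool{\Cond_M^{\SCfunc'}}$; by the fairness clause of \Cref{def:comp-inv-pair} this $M$ respects fairness at $\astate$, so $(\astate, M) \in E$. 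By $(I_2)$ there is $\astate^+$ with a concrete $\varphi_M$-witness such that $\astate^+$ either falsifies safety of $\Inv$ or satisfies $\Unreach$. Since $\astate \models \toBool{\Cond_M^{\SCfunc'}}$, the same $\varphi_M$-transition lives in $\TS^{\SCfunc'}$, forcing $\astate^+$ to be reachable in $\A{\Preds}(\TS^{\SCfunc'})$; then either $\Inv$'s safety is violated, or by the inductive hypothesis for $(I_1)$ applied to $\astate^+$, $\astate^+$ is unreachable under $\SCfunc'$ --- a contradiction.

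The principal subtlety will be the tight interleaving demanded by this joint induction: within a single inner loop, the witness $\astate^+$ justifying the newest element of $E$ may itself have been placed into $\Unreach$ just one iteration earlier. Hence the proof must walk the counterexample strictly from the last step backward, establishing $(I_1)$ for $\astate_{n-1}$ before invoking it to justify $(I_2)$ for $(\astate_{n-2}, M_{n-2})$, and so on down the trace. A minor further point is that, although the abstract transition relation of $\A{\Preds}(\TS^{\SCfunc'})$ depends on $\SCfunc'$, any concrete $\varphi_M$-transition is inherited by it at states where $\SCfunc'$ schedules $M$; this is immediate from \Cref{def:comp-program-logic}.
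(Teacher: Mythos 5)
Your proof is correct and follows essentially the same route as the paper: your $(I_1)$ is the paper's \Cref{cor:unreach}, your $(I_2)$ is exactly what the constraints in $E$ record (justified by \Cref{lem:diff-value}), and your contradiction argument --- reachable abstract states satisfy $\toBool{\Inv}$, covering plus fairness force a non-starving scheduled $M$, hence $(\astate,M)\in E$, hence an inherited bad transition under $\SCfunc'$ --- is the paper's proof of \Cref{lem:unreach}. The only difference is presentational: you make the joint induction over the algorithm's successive updates to $E$ and $\Unreach$ (including the backward walk through the inner loop) explicit, whereas the paper leaves that well-foundedness implicit when it appeals to ``the construction of $\Unreach$'' in the $\Unreach$-successor case of \Cref{lem:diff-value}.
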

This property clearly holds for the lock-step composition function, which the algorithm starts with, since for this composition, $\bigvee_M \toBool{\Cond_M} \equiv \true$.
As we explain in \Cref{cor:unreach}, it continues to hold throughout the algorithm.

As a result of this property, whenever a candidate composition function $\SCfunc$ does not satisfy condition {\bf S1} or {\bf S2},
it is never the case that
$\bigvee_M \toBool{\Cond_M}$ needs to be extended to allow the abstract states in $\cex$ to be reachable.
Instead, the abstract counterexample obtained in violation of the conditions needs to be
eliminated %
by modifying $\SCfunc$. %

Let $\cex = \astate_1,\ldots,\astate_{m+1}$ be an abstract counterexample of $\A{\Preds}(\TScomp)$  such that $\astate_1 \models \toBool{\varphi_{\preS}}$ and $\astate_{m+1} \models (\bigwedge_{i=1}^k \toBool{\varphi_{\Terminal^i}})\wedge \neg \toBool{\varphi_{\postS}}$ (violating {\bf S1}) or $\astate_{m+1} \models  \Unreach$ (violating {\bf S2}).
Any self composition $\SCfunc'$ that agrees with $\SCfunc$ on the states in
$\gamma(\astate_i)$ for every $\astate_i$ that appears in $\cex$ has the
same transitions in $\Tcomp$ and, hence, the same transitions in $\ATr$. It,
therefore, exhibits the same abstract counterexample in
$\A{\Preds}(\TScompprime)$. Hence, it violates {\bf S1} or {\bf S2} and is not part of any composition-invariant pair.

\begin{notation}
Recall that $\SCfunc$ is defined via conditions $\Cond_M \in \langof{\Preds}$.
This ensures that for every abstract state $\astate$, $\SCfunc$ is defined in the same way for all the states in $\gamma(\astate)$.
We %
denote %
the value of $\SCfunc$ on the states in $\gamma(\astate)$ by $\SCfunc(\astate)$ (in particular, $\SCfunc(\astate)$ may be undefined).
We get that $\SCfunc(\astate) = M$ if and only if $\astate \models \toBool{\Cond_M}$.
\end{notation}
Using this notation, to eliminate the abstract counterexample $\cex$, one needs to eliminate at least one of the transitions in $\cex$ by changing the definition of $\SCfunc(\astate_i)$ for \emph{some} $1 \leq i \leq m$.
For a new candidate function $\SCfunc'$ this may be encoded by the disjunctive constraint $\bigvee_{i=1}^m \SCfunc'(\astate_i) \neq \SCfunc(\astate_i)$.
However, we observe that a stronger requirement may be derived from $\cex$ based on the following lemma:

\begin{lemma} \label{lem:diff-value}
Let $\SCfunc$ be a self composition function and $\cex = \astate_1,\ldots,\astate_{m+1}$ a counterexample trace in $\A{\Preds}(\TScomp)$ %
such that $\astate_1 \models \toBool{\varphi_{\preS}}$ but $\astate_{m+1} \models (\bigwedge_{i=1}^k \toBool{\varphi_{\Terminal^i}})\wedge \neg \toBool{\varphi_{\postS}}$ or $\astate_{m+1} \models \Unreach$.
Then for any self composition function $\SCfunc'$ such that $\SCfunc'(\astate_{m}) = \SCfunc(\astate_{m})$, if $\astate_{m}$ is reachable in $\A{\Preds}(\TScompprime)$ from $\toBool{\varphi_{\preS}}$, then a counterexample trace to {\rm {\bf S1}} or {\rm {\bf S2}} exists. %
\end{lemma}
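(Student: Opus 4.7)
The plan is to show that the last transition of the counterexample, namely $(\astate_m,\astate_{m+1})$, is preserved in $\A{\Preds}(\TScompprime)$ whenever $\SCfunc'$ agrees with $\SCfunc$ on $\astate_m$. Given this, if $\astate_m$ is reachable from $\toBool{\varphi_{\preS}}$ in $\A{\Preds}(\TScompprime)$, one can simply append this transition to any witnessing abstract trace to obtain a trace reaching $\astate_{m+1}$, which still witnesses a violation of {\bf S1} or {\bf S2}.

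First I would unfold the definition of $\Tcomp$ from \Cref{def:comp-prog-semantic}: $\Tcomp = \bigvee_M (\Cond_M \wedge \varphi_M)$. Since each condition $\Cond_M$ lies in $\langof{\Preds}$, it is constant on $\gamma(\astate_m)$; letting $M = \SCfunc(\astate_m)$, every $\scomp_1 \in \gamma(\astate_m)$ satisfies $\Cond_M$ and no other $\Cond_{M'}$, so the outgoing transitions of $\scomp_1$ in $\TScomp$ are exactly the pairs satisfying $\varphi_M$. The same reasoning applied to $\SCfunc'$, together with the hypothesis $\SCfunc'(\astate_m) = M$, shows that the transition relations of $\TScomp$ and $\TScompprime$ agree on all pairs $(\scomp_1,\scomp_2)$ whose first component lies in $\gamma(\astate_m)$. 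Consequently the abstract outgoing transitions from $\astate_m$ are the same in $\A{\Preds}(\TScomp)$ and in $\A{\Preds}(\TScompprime)$, and in particular $(\astate_m,\astate_{m+1})$ is an abstract transition of $\A{\Preds}(\TScompprime)$.

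To conclude, let $\astate'_1,\ldots,\astate'_j=\astate_m$ be any trace in $\A{\Preds}(\TScompprime)$ witnessing reachability of $\astate_m$ from $\toBool{\varphi_{\preS}}$. Appending $\astate_{m+1}$ yields an abstract trace in $\A{\Preds}(\TScompprime)$ reaching $\astate_{m+1}$, and since the formulas $\bigwedge_{i=1}^k \toBool{\varphi_{\Terminal^i}} \wedge \neg \toBool{\varphi_{\postS}}$ and $\Unreach$ are state predicates that do not refer to the self-composition function, $\astate_{m+1}$ still satisfies whichever of them triggered the original counterexample. This yields the desired counterexample trace to {\bf S1} or {\bf S2}. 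The main conceptual point---and the only place one might slip---is recognizing that the outgoing abstract transitions of $\astate_m$ are determined pointwise by the single value $\SCfunc(\astate_m)$; once this observation is in place, the rest of the argument is nothing more than concatenation of traces.
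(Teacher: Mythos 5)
Your argument for the first disjunct (the case $\astate_{m+1} \models (\bigwedge_{i=1}^k \toBool{\varphi_{\Terminal^i}})\wedge \neg \toBool{\varphi_{\postS}}$) matches the paper's proof: since $\SCfunc'(\astate_m)=\SCfunc(\astate_m)$, the abstract transition $(\astate_m,\astate_{m+1})$ survives in $\A{\Preds}(\TScompprime)$, and appending it to a witnessing trace for $\astate_m$ yields a counterexample to {\bf S1}. (A minor caveat: your claim that the outgoing transitions of $\astate_m$ are \emph{exactly} the $\varphi_M$-pairs assumes the conditions $\{\Cond_M\}_M$ do not overlap, which the paper explicitly relaxes; but only the preservation of the one transition $(\astate_m,\astate_{m+1})$ is needed, and that part is fine.)

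The genuine gap is in the $\Unreach$ case. You argue that because $\Unreach$ is a state predicate independent of the composition function, the extended trace reaching $\astate_{m+1}\models\Unreach$ ``still witnesses a violation of {\bf S1} or {\bf S2}.'' But reaching a state in $\Unreach$ is not, by itself, a violation of either condition for an \emph{arbitrary} $\SCfunc'$: condition {\bf S2} requires reachable states to satisfy $\bigvee_M \toBool{\Cond'_M}$, where $\Cond'_M$ are the conditions of $\SCfunc'$, and nothing in the lemma's hypotheses prevents $\SCfunc'$ from being defined on $\astate_{m+1}$, in which case your trace violates neither {\bf S1} nor {\bf S2}. Membership in $\Unreach$ coincides with violating $\bigvee_M \toBool{\Cond_M}$ only for the algorithm's \emph{current} candidate, whereas the lemma (as it is used, e.g., in the proof of \Cref{lem:unreach}) must hold for compositions that are part of a composition-invariant pair and need not be undefined on $\Unreach$. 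The paper closes this case by appealing to the invariant maintained by the construction of $\Unreach$: every abstract state added to $\Unreach$ has, under \emph{every} non-starving self composition function, an outgoing abstract trace leading to a genuine violation of {\bf S1} or {\bf S2}; concatenating your trace to $\astate_{m+1}$ with such an outgoing trace in $\A{\Preds}(\TScompprime)$ gives the required counterexample. Without this (inductive) argument about how $\Unreach$ was built, the second case of the lemma does not go through.
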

\ifextended
\begin{proof}
Suppose that $\astate_{m}$ is reachable in $\A{\Preds}(\TScompprime)$ from $\toBool{\varphi_{\preS}}$. Then there exists a trace $\astate'_1,\ldots,\astate'_{m}$ in $\A{\Preds}(\TScompprime)$ such that $\astate'_1\models\toBool{\varphi_{\preS}}$ and $\astate'_m = \astate_m$.
Since $\SCfunc'(\astate_{m}) = \SCfunc(\astate_{m})$, the outgoing transitions of $\astate_{m}$ are the same in both $\A{\Preds}(\TScomp)$ and $\A{\Preds}(\TScompprime)$.
In particular, the transition $(\astate_m, \astate_{m+1})$ from $\A{\Preds}(\TScomp)$ also exists in $\A{\Preds}(\TScompprime)$.
Therefore, $\cex'=\astate'_1,\ldots,\astate'_{m},\astate_{m+1}$ is a trace to $\astate_{m+1}$ in $\A{\Preds}(\TScompprime)$. %
If $\astate_{m+1} \models (\bigwedge_{i=1}^k \toBool{\varphi_{\Terminal^i}})\wedge \neg \toBool{\varphi_{\postS}}$, then $\cex'$ is a counterexample to {\rm {\bf S1}} in $\A{\Preds}(\TScompprime)$ as well.
Consider the case where $\astate_{m+1} \models \Unreach$. %
By the construction of $\Unreach$, this %
indicates that $\astate_{m+1}$ has an outgoing abstract trace that leads to violation of {\rm {\bf S1}} or {\rm {\bf S2}} with every non-starving self composition function, and in particular in $\A{\Preds}(\TScompprime)$.
\qed
\end{proof}
\fi
\begin{corollary} \label{cor:diff-value}
If there exists a composition-invariant pair $(\SCfunc', \Inv')$, then there is also one where $\SCfunc'(\astate_{m}) \neq \SCfunc(\astate_{m})$.
\end{corollary}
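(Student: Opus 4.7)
The plan is a direct case analysis on the value $\SCfunc'(\astate_m)$. If $\SCfunc'(\astate_m) \neq \SCfunc(\astate_m)$, the pair $(\SCfunc', \Inv')$ itself witnesses the claim, so assume $\SCfunc'(\astate_m) = \SCfunc(\astate_m)$. By \Cref{lem:abs-check} applied to $(\SCfunc', \Inv')$, conditions \textbf{S1} and \textbf{S2} hold for $\SCfunc'$, so no abstract counterexample from $\toBool{\varphi_{\preS}}$ witnessing either violation exists in $\A{\Preds}(\TScompprime)$. The contrapositive of \Cref{lem:diff-value}, applied with this equality, then forces $\astate_m$ to be unreachable in $\A{\Preds}(\TScompprime)$ from $\toBool{\varphi_{\preS}}$.

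Given this unreachability, I would construct a new composition function $\SCfunc''$ by redefining $\SCfunc'$ only at the abstract state $\astate_m$, picking some fair $M'' \neq \SCfunc(\astate_m)$. Such $M''$ always exists: the fairness constraint \textbf{S3} admits at least two subsets of $\{1..k\}$ at $\astate_m$ (for instance, $\{1..k\}$ itself together with any singleton containing a non-terminal copy), so a choice different from $\SCfunc(\astate_m)$ is available. On every other abstract state $\SCfunc''$ coincides with $\SCfunc'$, and since a single abstract state is expressible in $\langof{\Preds}$, the modified conditions remain in $\langof{\Preds}$.

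I would then verify that $(\SCfunc'', \Inv'')$ is a composition-invariant pair, where $\Inv'' := \Inv' \wedge \neg \astate_m$ with $\astate_m$ viewed as a formula in $\langof{\Preds}$. Since redefining $\SCfunc'$ at $\astate_m$ only alters the \emph{outgoing} transitions of $\astate_m$, the incoming transitions in $\A{\Preds}(\TScompdoubleprime)$ coincide with those in $\A{\Preds}(\TScompprime)$, so $\astate_m$ remains unreachable in $\A{\Preds}(\TScompdoubleprime)$. Initiation of $\Inv''$ then follows from initiation of $\Inv'$ together with the fact that no initial state satisfies $\astate_m$; consecution reduces to consecution of $\Inv'$ under $\SCfunc'$ on all transitions that do not originate at $\astate_m$, which are unchanged; safety is inherited from $\Inv'$; covering holds because $\SCfunc''$ is defined on every abstract state (in particular those satisfying $\Inv''$); and fairness holds by the choice of $M''$.

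The main obstacle is this last verification, i.e., checking each clause of \Cref{def:comp-inv-pair} for the modified pair. The crux in every case is that $\astate_m$ is unreachable in $\A{\Preds}(\TScompdoubleprime)$, so the only genuinely new transitions in $\A{\Preds}(\TScompdoubleprime)$ originate from a state already excluded by $\Inv''$; once this observation is in hand the clause-by-clause bookkeeping is routine.
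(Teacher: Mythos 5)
Your first two steps (the case split and the use of the contrapositive of \Cref{lem:diff-value} to conclude that $\astate_m$ is unreachable in $\A{\Preds}(\TScompprime)$) match the paper, and your observation that a non-starving alternative value exists is fine for $k\geq 2$ (the paper instead allows leaving $\SCfunc'(\astate_m)$ undefined, which the relaxed covering condition permits). The gap is in the final verification, precisely at the point you dismiss as ``routine bookkeeping'': consecution of $\Inv'' = \Inv' \wedge \neg\toPred{\astate_m}$. Unreachability of $\astate_m$ from $\toBool{\varphi_{\preS}}$ only constrains the \emph{reachable} abstract states; the invariant $\Inv'$ may strictly overapproximate them, so it may contain a state $\scomp_1 \notin \gamma(\astate_m)$ whose (unchanged) transition under $\SCfunc'$ enters $\gamma(\astate_m)$, together with the states of $\gamma(\astate_m)$ themselves. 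Such $\scomp_1$ satisfies $\Inv''$, yet its successor violates $\neg\toPred{\astate_m}$, so $\Inv''$ is not inductive. When the counterexample ends in a state violating safety one can in fact rule this out (safety plus consecution of $\Inv'$ force $\toBool{\Inv'}$ to exclude $\astate_m$, and then no $\Inv'$-state can step into $\gamma(\astate_m)$), but when $\astate_{m+1} \models \Unreach$ no such argument is available: nothing prevents $\astate_m$ and a predecessor of it from lying inside $\Inv'$, since membership in $\Unreach$ only yields unreachability, not exclusion from every invariant.

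The paper avoids patching $\Inv'$ altogether. It argues through \Cref{lem:abs-check}: conditions {\bf S1} and {\bf S2} speak only about the set of abstract states reachable from $\toBool{\varphi_{\preS}}$, and redefining the composition at the unreachable $\astate_m$ changes only the outgoing transitions of $\astate_m$, hence leaves that reachable set unchanged; {\bf S3} holds for the newly chosen non-starving value (or $\SCfunc'(\astate_m)$ is left undefined). The converse direction of \Cref{lem:abs-check} then supplies a suitable invariant for $\A{\Preds}(\TScompdoubleprime)$, essentially the reachable abstract states themselves, which exclude $\astate_m$ and are closed under the new transitions by construction. To repair your proof, replace $\Inv' \wedge \neg\toPred{\astate_m}$ by (the concretization of) this reachable set, or simply invoke \Cref{lem:abs-check} as the paper does, rather than trying to reuse the given $\Inv'$.
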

\ifextended
\begin{proof}
If $\SCfunc'(\astate_{m}) = \SCfunc(\astate_{m})$, then by \Cref{lem:diff-value}, $\astate_{m}$ is necessarily unreachable in $\A{\Preds}(\TScompprime)$ from $\toBool{\varphi_{\preS}}$. Therefore, if we change $\SCfunc'(\astate_{m})$, all the requirements of \Cref{lem:abs-check} will still hold.
If no alternative value that admits the fairness requirement exists, then $\SCfunc'(\astate_m)$ can remain undefined.
\qed
\end{proof}
\fi

Therefore, we require that in the next self composition candidates the abstract state $\astate_m$ must not be mapped to its current value in $\SCfunc$, i.e.,  %
$\SCfunc'(\astate_{m}) \neq M$, where $\SCfunc(\astate_{m}) = M$\footnote{If the conditions $\{\Cond_M\}_M$ defining $\SCfunc$ may overlap, we consider the condition $\Cond_M$ %
by which the transition from $\astate_{m}$ to $\astate_{m+1}$ was defined.}.

\Cref{alg:infer-2} accumulates these constraints in the set $E$ (\Cref{line:add-e}). %
Formally, the constraint $(\astate, M) \in E$ asserts that $\Cond_M'$ must imply $\neg (\bigwedge_{\astate(b_{\pred}) = 1}  {\pred} \wedge
\bigwedge_{\astate(b_{\pred}) = 0} \neg {\pred})$, and hence $\SCfunc'(\astate) \neq M$.

\subsubsection*{Identifying abstract states that must be unreachable.}
A new candidate self composition is constructed such that it satisfies all the constraints in $E$ (thus ensuring that no abstract counterexample will re-appear).
In the construction, we make sure to satisfy {\bf S3} %
(fairness). Therefore, for every abstract state $\astate$, %
we choose a value $\SCfunc'(\astate)$ that satisfies the constraints in $E$ and is \emph{non-starving}: a value $M$ is starving for $\astate$ if $\astate \models \bigvee_{j=1}^k \neg \toBool{\varphi_{\Terminal^j}}$ but $\astate \not \models \bigvee_{j\in M} \neg \toBool{\varphi_{\Terminal^j}}$, i.e., some of the copies have not terminated in $\astate$ but none of the non-terminating copies is scheduled. (Due to adequacy, a value $M$ is starving for $\astate$ if and only if it is starving for every $\scomp \in \gamma(\astate)$.)

If for some abstract state $\astate$, all the non-starving values have already been excluded (i.e., $(\astate,M) \in E$ for every non-starving $M$), we %
conclude that there is \emph{no} $\SCfunc'$ such that $\astate$ is reachable in $\A{\Preds}(\TScompprime)$ and $\SCfunc'$ is part of a composition-invariant pair:
\begin{lemma}
\label{lem:unreach}
Let $\astate \in \AStates$ be an abstract state such that for every $\emptyset \neq M \subseteq \{1..k\}$ either $M$ is starving for $\astate$ or $(\astate,M) \in E$.
Then, for every $\SCfunc'$ that satisfies {\rm {\bf S3}}, if $\A{\Preds}(\TScompprime)$ satisfies {\rm {\bf S1}} and {\rm {\bf S2}}, then $\astate$ is unreachable in $\A{\Preds}(\TScompprime)$.
\end{lemma}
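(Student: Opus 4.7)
The plan is to argue by contradiction: suppose $\astate$ is reachable in $\A{\Preds}(\TScompprime)$ under some $\SCfunc'$ that satisfies {\bf S3} together with {\bf S1} and {\bf S2}, and derive a violation of {\bf S1} or {\bf S2}. Let $M' = \SCfunc'(\astate)$.

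First I would show that $M'$ must be \emph{non-starving} for $\astate$. If $\astate \models \bigvee_{j=1}^k \neg \toBool{\varphi_{\Terminal^j}}$, then fairness ({\bf S3}) forces $\astate \models \bigvee_{j \in M'} \neg \toBool{\varphi_{\Terminal^j}}$, so $M'$ is non-starving by definition. If instead $\astate \models \bigwedge_{j=1}^k \toBool{\varphi_{\Terminal^j}}$, then the premise in the definition of starving fails vacuously and every $M$ is non-starving. In either case, by the hypothesis of the lemma, $(\astate, M') \in E$.

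Next I would unpack what $(\astate, M') \in E$ means operationally. This pair was inserted into $E$ on \Cref{line:add-e} (or \Cref{line:add-more-e}) during some earlier iteration in response to an abstract counterexample $\cex^{\mathit{old}} = \astate_1,\ldots,\astate_m,\astate_{m+1}$ with $\astate_m = \astate$, such that the composition function $\SCfunc^{\mathit{old}}$ active at that point satisfied $\SCfunc^{\mathit{old}}(\astate) = M'$ and such that $\astate_{m+1}$ violated either {\bf S1} or {\bf S2} (relative to the $\Unreach$ formula at that time). The transition $(\astate_m, \astate_{m+1})$ is witnessed purely by $\varphi_{M'}$ and the conditions defining $M'$ at $\astate$, so since $\SCfunc'(\astate) = M' = \SCfunc^{\mathit{old}}(\astate)$, the same abstract transition $(\astate, \astate_{m+1})$ also exists in $\A{\Preds}(\TScompprime)$. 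Combining this with the assumed reachability of $\astate$ in $\A{\Preds}(\TScompprime)$ gives, just as in the proof of \Cref{lem:diff-value}, a new trace from $\toBool{\varphi_{\preS}}$ to $\astate_{m+1}$ in $\A{\Preds}(\TScompprime)$.

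The remaining step, which I expect to be the main obstacle, is concluding that this trace actually witnesses a violation of {\bf S1} or {\bf S2} in $\A{\Preds}(\TScompprime)$. The {\bf S1} case is immediate, since $\astate_{m+1} \models (\bigwedge_{i} \toBool{\varphi_{\Terminal^i}}) \wedge \neg \toBool{\varphi_{\postS}}$ is a property of $\astate_{m+1}$ alone and does not depend on the composition. The {\bf S2} case is subtler: $\astate_{m+1}$ was only known to satisfy the $\Unreach$ formula that existed at that earlier iteration, and the invariant asserted just before the lemma (stated in \Cref{cor:unreach}) must be invoked to transfer this to the current $\SCfunc'$. Concretely, I would argue by induction on the order in which states were added to $\Unreach$ that each such state has the property that under every composition function satisfying {\bf S3}, reaching it forces a subsequent violation of {\bf S1} or {\bf S2}; the base case is a direct {\bf S1}-violation and the inductive step is exactly the argument above applied to the preceding $\Unreach$-insertion. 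With this invariant in hand, the assumption that $\A{\Preds}(\TScompprime)$ satisfies both {\bf S1} and {\bf S2} is contradicted, completing the proof. \qed
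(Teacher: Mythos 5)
Your proposal is correct and follows essentially the same route as the paper: the paper also derives, from fairness ({\bf S3}) and the lemma's hypothesis, that $(\astate,\SCfunc'(\astate))\in E$, and then replays the recorded counterexample step from $\astate$ under the agreeing value $M'$ (this is exactly \Cref{lem:diff-value}) to contradict {\bf S1} or {\bf S2}. The one point where you go beyond the paper is the $\Unreach$ case: the paper's proof of \Cref{lem:diff-value} handles it with an informal appeal to ``the construction of $\Unreach$,'' whereas you make this precise by a well-founded induction on the order in which states were inserted into $\Unreach$ (each insertion's witnessing successors are either direct {\bf S1} violations or states inserted strictly earlier), which is a genuine tightening of the otherwise circular-looking dependency among \Cref{lem:diff-value}, \Cref{lem:unreach} and \Cref{cor:unreach}; you also bypass the paper's detour through \Cref{lem:abs-check} and the invariant $\Inv$. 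One small point to make explicit: you write ``let $M'=\SCfunc'(\astate)$'' without justifying that $\SCfunc'$ is defined on $\astate$; either take $\SCfunc'$ total, or note that if $\astate$ is reachable yet covered by no $\Cond'_M$ then {\bf S2} is already violated (the paper obtains definedness from the covering clause of \Cref{def:comp-inv-pair} via \Cref{lem:abs-check}).
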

\ifextended
\begin{proof}
If $\SCfunc'$ satisfies {\bf S3} and $\A{\Preds}(\TScompprime)$ satisfies {\bf S1} and {\bf S2}, then according to \Cref{lem:abs-check} $\SCfunc'$ is a part of some composition-invariant pair $(\SCfunc',\Inv)$ for $\TS$.
Furthermore, as shown in the proof of \Cref{lem:abs-check}, every (abstract) state that is reachable from $\toBool{\varphi_{\preS}}$ in $\A{\Preds}(\TScompprime)$ satisfies $\toBool{\Inv}$.
Assume to the contrary that $\astate$ is reachable in $\A{\Preds}(\TScompprime)$.
Then $\astate \models \toBool{\Inv}$.
According to \Cref{def:comp-inv-pair}, $\SCfunc'$ must be defined for $\astate$, thus $\SCfunc'(\astate)=M'$ for some $\emptyset\neq M'\subseteq\{1\ldots k\}$. Since $\SCfunc'$ is fair (satisfies {\bf S3}) it must be the case that $(\astate,M')\in E$. According to the algorithm, at some iteration there was a composition $\SCfunc''$ with $\SCfunc''(\astate)=M'$ that caused adding $(\astate,M')$ to $E$, i.e., there was a counterexample to {\bf S1} or {\bf S2} in $\A{\Preds}(\TScompdoubleprime)$ in the form of a trace to $\astate$.
Then \Cref{lem:diff-value} implies that there is also a counterexample to {\bf S1} or {\bf S2} in $\A{\Preds}(\TScompprime)$ because $\SCfunc'(\astate)=\SCfunc''(\astate)=M'$. This contradicts the assumption that $\A{\Preds}(\TScompprime)$ satisfies {\bf S1} and {\bf S2}.
\qed
\end{proof}
\fi
\begin{corollary} \label{cor:unreach}
If there exists a composition-invariant pair $(\SCfunc', \Inv')$, then $\astate$ is unreachable in $\A{\Preds}(\TScompprime)$.
\end{corollary}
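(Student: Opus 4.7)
The plan is to derive the corollary as an immediate consequence of the preceding Lemma~\ref{lem:unreach}, using Lemma~\ref{lem:abs-check} as the bridge from the existence of a composition-invariant pair to the abstract-level conditions {\bf S1}, {\bf S2}, {\bf S3}. The corollary inherits the standing hypothesis on $\astate$ from the surrounding discussion, namely that for every $\emptyset \neq M \subseteq \{1..k\}$ either $M$ is starving for $\astate$ or $(\astate,M) \in E$, so the hypothesis of Lemma~\ref{lem:unreach} is available.

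Concretely, I would argue as follows. Assume $(\SCfunc',\Inv')$ is a composition-invariant pair for $\TS$ and $\prop{\preS}{\postS}$ with $\Inv' \in \langof{\Preds}$ and conditions $\{\Cond_M'\}_M \subseteq \langof{\Preds}$ defining $\SCfunc'$. By the ``$\Rightarrow$'' direction of Lemma~\ref{lem:abs-check}, the three conditions {\bf S1}, {\bf S2}, and {\bf S3} hold for $\SCfunc'$ on $\A{\Preds}(\TScompprime)$. In particular $\SCfunc'$ satisfies the fairness condition {\bf S3}, and $\A{\Preds}(\TScompprime)$ satisfies {\bf S1} and {\bf S2}. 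These are exactly the premises of Lemma~\ref{lem:unreach}, which then delivers the desired conclusion that $\astate$ is unreachable in $\A{\Preds}(\TScompprime)$.

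Since the real content is already discharged by Lemma~\ref{lem:unreach}, there is no genuine obstacle here; the only thing to double-check when writing it out is that the hypothesis on $\astate$ (all non-starving values of $\SCfunc$ at $\astate$ have already been excluded by $E$) is indeed in force at the point where the corollary is invoked in \Cref{alg:infer-2}, so that applying Lemma~\ref{lem:unreach} is legitimate. This is guaranteed by the loop guard \noValue$(\astate,E)$ that triggers the update $\Unreach \gets \Unreach \vee \astate$, which is precisely where \Cref{cor:unreach} is used to justify soundness of marking $\astate$ as unreachable.
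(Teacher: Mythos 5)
Your proposal is correct and follows essentially the same route as the paper: the corollary is an immediate consequence of Lemma~\ref{lem:unreach} (whose hypothesis on $\astate$ is exactly the loop-guard condition in \Cref{alg:infer-2}), with the forward direction of Lemma~\ref{lem:abs-check} supplying {\bf S1}--{\bf S3} for any composition-invariant pair in $\langof{\Preds}$. Your explicit note that $\SCfunc'$ and $\Inv'$ must be expressed in $\langof{\Preds}$ for Lemma~\ref{lem:abs-check} to apply is exactly the implicit context in which the paper states and uses the corollary.
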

This is because no matter how the self composition function $\SCfunc'$ would be defined, $\astate$ is guaranteed to have an outgoing abstract counterexample trace in $\A{\Preds}(\TScompprime)$.

We, therefore, turn $\SCfunc'(\astate)$ to be undefined.
As a result, condition {\bf S2} of \Cref{lem:abs-check} requires that $\astate$ will be unreachable in $\A{\Preds}(\TScompprime)$.
In \Cref{alg:infer-2}, this is enforced by adding $\astate$ %
to $\Unreach$ (\Cref{line:add-unreach}).

Every abstract state $\astate$ that is added to $\Unreach$ is a strengthening of
the safety property by an additional constraint that needs to be obeyed in any composition-invariant pair, where obtaining a composition-invariant  pair is the target of the algorithm.
This makes our algorithm \emph{property directed}.

If an abstract state that satisfies $\toBool{\varphi_{\preS}}$ is added to $\Unreach$, then \Cref{alg:infer-2} determines that no solution exists (\Cref{line:ret-false}).
Otherwise, it generates a new constraint for $E$ based on the abstract state preceding $\astate$ in the abstract counterexample (\Cref{line:add-more-e}).

\subsubsection*{Constructing the next candidate self composition function.}
Given the set of constraints in $E$ and the formula $\Unreach$, $\fixFunc$ (\Cref{line:modify}) generates the next candidate composition function by (i)~taking a constraint $(\astate, M)$ such that $\astate \not \models \Unreach$ (typically the one that was added last), (ii)~selecting a non-starving value $M_{\text{new}}$ for $\astate$ (such a value must exist, otherwise $\astate$ would have been added to $\Unreach$), and (iii)~updating the conditions defining $\SCfunc'$ as follows:
\vspace{-0.3cm}
\begin{align*}
\Cond_M'  &= \Cond_M \wedge \neg \toPred{\astate}&
\Cond_{M_{\text{new}}}' &= \left(  \Cond_{M_{\text{new}}} \vee \toPred{\astate} \right)
\vspace{-0.3cm}
\end{align*}
The conditions of other values remain as before.
This definition is facilitated by the fact that the same set of predicates is used both for defining $\SCfunc'$ and for defining the abstract states $\astate \in \AStates$ (by which $\Inv$ is obtained).
Note that in practice we do not explicitly turn $\SCfunc'$ to be undefined for $\gamma(\Unreach)$. However, these definitions are ignored.
The definition ensures that $\SCfunc'$ is non-starving (satisfying condition {\bf S3}) and that no two conditions $\Cond'_{M_1}\neq \Cond'_{M_2}$ overlap.
While the latter is not required, it also does not restrict the generality of the approach (since the language we consider is closed under Boolean operations).

\begin{theorem}
Let $\TS$ be a transition system, $\prop{\preS}{\postS}$ a $k$-safety property and $\Preds$ a set of predicates over $\Varscomp$.
If \Cref{alg:infer-2} returns `` no solution'' then there is no composition-invariant pair for $\TS$ and $\prop{\preS}{\postS}$ in $\langof{\Preds}$.
Otherwise, $(\SCfunc, \toPred{\Inv})$ returned by \Cref{alg:infer-2} is a composition-invariant pair in $\langof{\Preds}$, and thus $\TS \kmodels \prop{\preS}{\postS}$.
\end{theorem}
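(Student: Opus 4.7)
The plan is to split the statement into three parts: (i) partial correctness when the algorithm returns a pair, (ii) partial correctness when it returns ``no solution'', and (iii) termination.

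For part (i), suppose the algorithm returns at \Cref{line:ret-true} with $(\SCfunc, \toPred{\Inv})$. I will show that the three conditions of \Cref{lem:abs-check} hold, so that the existential conclusion of that lemma immediately gives a composition-invariant pair. Conditions \textbf{S1} and \textbf{S2} are precisely what $\safe$ verifies, via reachability in $\A{\Preds}(\TScomp)$ from $\toBool{\varphi_{\preS}}$, using the current $\Unreach$ to encode the states outside $\bigvee_M \toBool{\Cond_M}$. For condition \textbf{S3} (fairness) I will argue by induction on iterations of the outer \texttt{while} loop that every candidate $\SCfunc$ constructed by the algorithm is non-starving: the base case is the lock-step initialization (\Cref{line:lockstep}), and in the inductive step \fixFunc by construction only moves an abstract state $\astate$ to a value $M_{\text{new}}$ which is chosen to be non-starving for $\astate$, leaving the other conditions as they were. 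The final application of \Cref{lem:abs-check} then yields that $\toPred{\Inv}$, obtained by $\gamma$-concretization of the symbolic set of reachable abstract states returned by $\safe$, paired with $\SCfunc$, is a composition-invariant pair; \Cref{thm:soundness} (via the intermediate lemma proving that a composition-invariant pair implies $\TS \kmodels \prop{\preS}{\postS}$) yields the $k$-safety conclusion.

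For part (ii), suppose the algorithm returns ``no solution'' at \Cref{line:ret-false}. The key invariant I will establish by induction on the iterations is the claim stated just before \Cref{lem:unreach}: \emph{every abstract state added to $\Unreach$ must be unreachable in $\A{\Preds}(\TScompprime)$ for any $\SCfunc'$ that is part of a composition-invariant pair}. This is the main obstacle, and it rests on a careful use of \Cref{lem:diff-value} and \Cref{cor:unreach}. The base case is trivial since $\Unreach$ is initialized to \false. For the inductive step, whenever \noValue signals that all non-starving values $M$ for $\astate$ have already been excluded via $E$, I invoke \Cref{cor:unreach} to conclude that $\astate$ must be unreachable in $\A{\Preds}(\TScompprime)$ for any $\SCfunc'$ in a composition-invariant pair. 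Here I need to justify that every $(\astate,M) \in E$ was added for a legitimate reason: this is exactly the content of \Cref{cor:diff-value} applied to the abstract counterexample from which $(\astate,M)$ originated, combined with the inductive hypothesis that all earlier additions to $\Unreach$ are themselves unreachable in any good composition. Once the invariant is established, termination at \Cref{line:ret-false} means $\Unreach \wedge \toBool{\varphi_{\preS}} \not\equiv \false$, i.e., some abstract initial state has been declared unreachable; since any initial state is trivially reachable in $\A{\Preds}(\TScompprime)$ for any $\SCfunc'$, no composition-invariant pair can exist in $\langof{\Preds}$.

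For part (iii), termination follows from a simple monotonicity argument: both $E$ and $\Unreach$ grow monotonically along iterations, and are both bounded ($|E| \le |\AStates| \cdot 2^k$ and $\Unreach$ ranges over subsets of $\AStates$, where $|\AStates| = 2^{|\Preds|}$). Each iteration of the outer loop that does not return strictly increases $E$ (\Cref{line:add-e}), and each iteration of the inner loop strictly increases $\Unreach$ (\Cref{line:add-unreach}). Hence the algorithm terminates in $2^{O(|\Preds|)}$ iterations, matching the complexity bound stated earlier.
\qed
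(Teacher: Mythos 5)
Your proof is correct and takes essentially the same approach as the paper's own (much terser) proof: part (i) is the paper's appeal to \Cref{lem:abs-check} (with fairness holding by construction of the candidates), and part (ii) is the paper's appeal to \Cref{cor:unreach}, which you justify by making the inductive maintenance of the claim explicit. Part (iii), termination, is not actually claimed by the theorem; the paper addresses it separately in the complexity paragraph with essentially your counting argument on $E$.
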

\ifextended
\begin{proof}
\Cref{alg:infer-2} returns `` no solution'' when $\Unreach \wedge \toBool{\varphi_{\preS}}$ is satisfiable. This means that there is an abstract state $\astate$ that satisfies $\toBool{\varphi_{\preS}}$ but also satisfies $\Unreach$. By the construction of $\Unreach$, this means that $\astate$ must be unreachable from $\toBool{\varphi_{\preS}}$ in any $\A{\Preds}(\TScompprime)$ such that $(\SCfunc', \Inv')$ a composition-invariant pair in $\langof{\Preds}$ (see \Cref{cor:unreach}). Hence, no such $(\SCfunc', \Inv')$ exists.
Conversely, \Cref{alg:infer-2} returns $(\SCfunc, \toPred{\Inv})$ when all the conditions listed in \Cref{lem:abs-check} are met, thus $(\SCfunc, \toPred{\Inv})$ is a composition-invariant pair.
\qed
\end{proof}
\fi

\paragraph{Complexity.}
Each iteration of \Cref{alg:infer-2} adds at least one constraint to $E$, excluding a potential value for $\SCfunc$ over some abstract state $\astate$.
An excluded values is never re-used.
Hence, the number of iterations is at most the number of abstract states, $2^{|\Preds|}$, multiplied by the number of potential values for each abstract state, $n=2^k$.
Altogether, the number of iterations is at most $O(2^{|\Preds|} \cdot 2^k)$.
Each iteration makes one call to $\safe$ which checks reachability via predicate abstraction, hence, assuming that satisfiability checks in the original logic are at most exponential, its complexity is  $2^{O(|\Preds|)}$.
Therefore, the overall complexity of the algorithm is
$2^{O(|\Preds|)+k}$.
Typically, $k$ is a small constant, hence the complexity is dominated by %
$2^{O(|\Preds|)}$.

\iflong
\paragraph{Optimization}
To further enhance the search for a suitable self composition function, it is possible to generalize the constraints that are added to $E$.
Rather than adding $(\astate_m, M)$, where $\astate_m$ is the abstract state before last in an abstract counterexample trace, we can
first generalize $\astate$ by finding its minimal sub-cube $a$ such that all the states in $\gamma(a)$ transition to $\astate_{m+1}$ when the copies in $M$ make a step.
(Alternatively, different generalization schemes based on a weakest precondition computation may be used).
In this way, each constraint may block a value $M$ for multiple abstract states at once.

\ron{The generalization is not implemented, anyway it affects the function modification as described hereand may result in both modifiying the function and adding states to Bad. Should this be in the formal description of the algorithm?}

\TODO{Naive way of seeing it within PDR: think of each frame as mapping a
  composition function to a set of lemmas. The states that are added to $\Bad$
  are like proof obligations that are shared among all composition functions.
  Can we share more information? }

\ag{Optimization paragraph is rushed. Not sure it will add much for the reader.
If it is also not implemented, it should be dropped or moved into discussion in
conclusion or implementation section.}

\fi

\section{Evaluation and Conclusion}

\paragraph{Implementation. }
We implemented \Pdsc (\Cref{alg:infer-2}) 
in Python on top of Z3~\cite{DBLP:conf/tacas/MouraB08}.
Its input is %
a transition system encoded by %
Constrained Horn Clauses (CHC) in SMT2 format, a $k$-safety property and
a set of predicates. The abstraction is implicitly encoded using the
approach of~\cite{DBLP:conf/tacas/CimattiGMT14}, and is parameterized
by a composition function that is modified in each iteration.
For reachability checks ($\safe$) we use \Spacer~\cite{DBLP:conf/cav/KomuravelliGC14}, which supports LRA and arrays.
For the set of predicates used by \Pdsc,
we implemented an automatic procedure that mines these
predicates from the CHC. Additional predicates may be added manually.

\paragraph{Experiments. }
To evaluate \Pdsc, we compare it to \Syn~\cite{DBLP:conf/cav/PickFG18}, the current state of the art in $k$-safety verification.

To show the effectiveness of \Pdsc, we consider
examples that require a \emph{nontrivial} composition
\ifextended
(these examples are detailed in \Cref{app:examples}).
\else
(these examples are detailed in~\cite{extended}). %
\fi
We emphasize that the motivation for these
example is originated in real-life scenarios. For example, \Cref{fig:example} follows
a pattern of constant-time execution. %
The results of these experiments are summarized in \Cref{tbl:results}.
\Pdsc is able to find the right composition function and prove all
of the examples, while \Syn cannot verify any of them.
We emphasize that for these examples, lock-step composition is not sufficient.
However, \Pdsc infers %
a composition that depends on
the programs' state (variable values), rather than just program locations.

\begin{figure}[t]
\begin{floatrow}

\capbtabbox{%
\scalebox{0.9}{
  \begin{tabular} {|| c ||c  | c | c || c ||}
    \hline
   \multirow{2}{*}{\textbf{Program}} & \multicolumn{2}{c|}{PDSC} & \multirow{2}{*}{SYNONYM} \\
    \cline{2-3}
     	     & Time(s)			& Iteations	& \\
    \hline
    DoubleSquareNI 	     & 7			& 33	& fail	\\
    \hline
    HalfSquareNI          & 3.4			& 28	& fail	\\
	\hline
	ArrayIntMod  & 58.2			& 168	& fail \\
	\hline
	SquaresSum   & 2.8			& 4		& fail \\
	\hline
	ArrayInsert  & 19.5			& 102	& fail \\
	\hline	
\end{tabular}
}
}{%
  \caption{Examples that require semantic compositions}\label{tbl:results}
}
\ffigbox{%
  \includegraphics[width=0.3\textwidth]{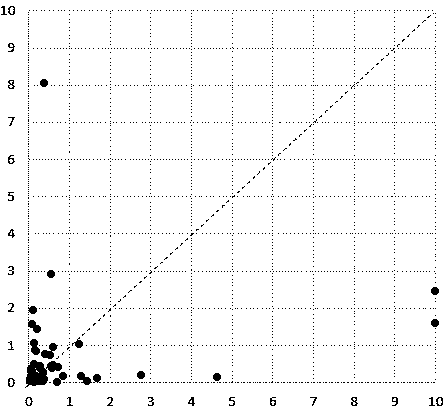}
}{%
  \caption{Runtime comparison (in sec.): \Pdsc (x-axis) and \Syn (y-axis).}\label{fig:syn}%
}
\end{floatrow}
\vspace{-0.4cm}
\end{figure}

Next we consider Java programs
from~\cite{DBLP:conf/cav/PickFG18,DBLP:conf/pldi/SousaD16},
which we manually converted to C, and then converted to CHC using \SeaHorn~\cite{DBLP:conf/cav/GurfinkelKKN15}. 
For all but 3 examples, only 2 types of predicates, which we mined automatically, were sufficient
for verification:
(i)~relational predicates derived from the pre- and post-conditions, %
and (ii)~for simple loops that have an index variable (e.g., for iterating over an array), an equality predicate between the copies of the indices.
These predicates were sufficient %
since we used a large-step encoding
of the transition relation, hence the abstraction via predicates takes effect only at cut-points.
For the remaining 3 examples, we manually added 2--4 predicates.
With the exception of 1 example where a timeout of 10 seconds was reached, all examples 
were solved with a lock-step composition function.
Yet, we include them
to show that on  examples with simple compositions  \Pdsc performs similarly to \Syn.
This can be seen in \Cref{fig:syn}.

%
\vspace{-0.3cm}
\subsubsection{Conclusion and Future Work.}
This work formulates the problem of inferring a self composition function together with an inductive invariant for the composed program,
thus capturing the interplay between the self composition and the difficulty of verifying the resulting composed program.
To address this problem we present \Pdsc -- an algorithm for inferring a semantic self composition, directed at verifying the composed program with a given language of predicates.
We show that \Pdsc manages to find nontrivial self compositions that are beyond reach of existing tools.
In future work, we are interested in further improving \Pdsc by extending it with additional (possibly lazy) predicate discovery abilities. This has the potential to both improve performance and verify properties over wider range of programs.
Additionally, we consider exploring further generalization techniques during the inference procedure.

\vspace{-0.3cm}
\subsubsection*{Acknowledgements}
This publication is part of a project that has received funding from the European Research Council (ERC) under the European Union's Horizon 2020 research and innovation programme (grant agreement No [759102-SVIS]).
The research was partially supported by
Len Blavatnik and the Blavatnik Family foundation,
the Blavatnik Interdisciplinary Cyber Research Center, Tel Aviv University,
the Israel Science Foundation (ISF) under grant No. 1810/18
and the United States-Israel Binational Science Foundation (BSF) grant No. 2016260.
 
\bibliographystyle{splncs04}
\bibliography{refs}
\ifextended
\newpage
\appendix
\section{Benchmarks Used in the Evaluation}
\label{app:examples}

In this section, we elaborate on the examples from \Cref{tbl:results}.

\subsection{DoubleSquareNI}
\begin{figure}[H]
\footnotesize
\raggedright
\begin{tabular}{cc}
\begin{minipage}{0.3\textwidth}
\lstset{basicstyle=\ttfamily\scriptsize}
\lstset{language=C,morekeywords={post,pre,bool,assert,predicates,nd,composition}}
\begin{lstlisting}[mathescape=true]
pre($x_1 == x_2$)

doubleSquare(bool h, int x){
  int z, y=0;
  if(h) { z = 2*x; }
  else { z = x; }
  while (z>0) {
  	z--;
  	y = y+x;
  }
  if(!h) { y = 2*y; }
  return y;
}

post($y_1 == y_2$)
\end{lstlisting}
\end{minipage}
& \hspace{0.8cm}
\begin{minipage}{0.4\textwidth}
\lstset{basicstyle=\ttfamily\scriptsize}
\lstset{language=C,morekeywords={predicates,composition}}
\begin{lstlisting}[mathescape=true]
predicates:
  $h_1$, $h_2$, $x_1>0$, $y_1\geq 0$, $y_2\geq 0$, $z_2\geq 0$,
  $z_2\geq 0$, $x_1=x_2$, $y_1=y_2$, $y_1=2y_2$, $y_2=2y_1$,
  $z_1=z_2$, $z_1=2z_2$, $z_2=2z_1$, $z_1=2z_2-1$,
  $z_2=2z_1-1$, $y_1=2y_2+x_2$, $y_2=2y_1+x_1$
\end{lstlisting}
\vspace{0.5cm}
\begin{small}
\lstset{basicstyle=\ttfamily\scriptsize}
\lstset{language=C,morekeywords={predicates,composition}}
\begin{lstlisting}[mathescape=true]
composition =
if((($z0>0$ & $z_2>0$ | ($z_1\leq 0$ & $z_2\leq 0$))
  && ($h_1$ & $z_1==2*z_2$)
  && !($h_1==h_2$ || ($z_1==0$ & $z_2==0$)))
    || (!($z_1>0$ & $z_2>0$ | ($z_1\leq 0$ & $z_2\leq 0$))
      & $z_2\leq 0$ & $z_1>0$))
    step (1);
else if ((($z_1>0$ & $z_2>0$ | ($z_1\leq 0$ & $z_2\leq 0$))
        && !($h_1==h_2$ | ($z_1==0$ & $z_2==0$))
        && !($h_1$ & $z_1==2*z_2$) & ($z_2==2*z_1$))
          || !($z_1>0$ & $z_2>0$ | ($z_1\leq 0$ & $z_2\leq 0$)))
	step (2);
else
    step(1,2);

\end{lstlisting}
\end{small}
\end{minipage}
\end{tabular}
\caption{A program that computes $2x^2$; the computation depends on a secret bit $h$ while $x$ is the low input. \label{fig:example_power}}
\end{figure}

\Cref{fig:example_power} depicts a non-interference problem (a 2-safety problem) where $x$ is the low input and $h$ is the high input. Taint analysis methods cannot prove non-interference for this program, and no proof exists when the product program presented in~\cite{DBLP:conf/esop/EilersMH18} is applied (see \Cref{sec:insufficieny}). However, using the language of predicates presented (also in \Cref{fig:example_power}), our algorithm infers a composition-invariant pair that proves non-interference for the program.

\subsection{HalfSquareNI}
\begin{figure}[H]
\footnotesize
\centering
\begin{tabular}{cc}
\begin{minipage}{0.4\textwidth}
\lstset{basicstyle=\ttfamily\scriptsize}
\lstset{language=C,morekeywords={pre,post,predicates,composition}}
\begin{lstlisting}[mathescape=true]
pre($low_1 == low_2$)

halfSquare(int h, int low){
  assume(low > h > 0);
  int i = 0, y = 0, v = 0
  while (h > i) {
  	i++; y += y;
  }
  v = 1;
  while (low > i) {
  	i++; y += y;
  }
  return y;
}

post($y_1 == y_2$)
\end{lstlisting}
\end{minipage}
& \hspace{1cm}
\begin{minipage}{0.4\textwidth}
\lstset{basicstyle=\ttfamily\scriptsize}
\lstset{language=C,morekeywords={predicates,composition}}
\begin{lstlisting}[mathescape=true]
predicates:
  $h_1>0$, $h_2>0$ , $low_1>h_1$,
  $low_2>h_2$, $i_1<h_1$ ,$i_2<h_2$,
  $i_1<low_1$, $i_2<low_2$, $v_1=1$,
  $v_2=1$, $y_1=y_2$, $i_1=i_2$,
  $low_1=low_2$
\end{lstlisting}
\vspace{0.5cm}
\lstset{basicstyle=\ttfamily\scriptsize}
\lstset{language=C,morekeywords={predicates,composition}}
\begin{lstlisting}[mathescape=true]
composition =
  if (($v_1 == 0$ && $i_1 \geq h_1$)
  	& ($i_2 < h_2$ || $v_2 == 1$))
      step(1);
  else if (($v_2 \neq 1$ && $i_2 \geq h_2$)
  	  & ($i_1 < h_1$ || $v_1 == 1$))
      step(2);
  else
      step(1,2);
\end{lstlisting}
\end{minipage}
\end{tabular}
\caption{A program that computes $\frac{low^2}{2}$; the computation is not continuous and depends on a secret variable $h$. \label{fig:example_sum}}
\end{figure}

In the program presented at \Cref{fig:example_sum} we consider the non-interference property, with pre-condition $low_1=low_2$ (low input) and post-condition $y_1=y_2$ (non-interference). The high input $h$ has no constraints as implied from the pre-condition. Intuitively, the difficulty of proving non-interference for this program is the need to "skip" the statement between the two loops in order to keep the outputs of the copies equal in every composed state along the execution. The suggested composition aligns the computations such that they proceed simultaniously only when both are at either loops, which makes $i_1=i_2\wedge y_1=y_2$ true for every state of the self composed program.

\subsection{ArrayIntMod}
\begin{figure}[H]
\footnotesize
\centering
\begin{tabular}{cc}
\begin{minipage}{0.4\textwidth}
\lstset{basicstyle=\ttfamily\scriptsize}
\lstset{language=C,morekeywords={pre,post,boolean,predicates,composition}}
\begin{lstlisting}[mathescape=true]
pre($o1_1 == o2_2$ && $o2_1 == o1_2$)

int compare(AInt o1, AInt o2){
     if(o1.len != o2.len){
        return 0;
     }
     boolean flag = (o1.get(0)>0);
     int i, aentry, bentry,last1,last2;
     i = 0;

     while ((i < o1.len) && (i<o2.len)) {
       aentry = o1.get(i);
       bentry = o2.get(i);
       if (aentry < bentry) {
           return -1;
       }
       if (aentry > bentry) {
           return 1;
       }
       i++;

     if(flag && ((i < o1.len) && (i < o2.len))){
          aentry = o1.get(i);
          bentry = o2.get(i);
          if (aentry < bentry) {
           return -1;
            }
            if (aentry > bentry) {
           return 1;
            }
          i++;
       }

     }
     return 0;
   }
}

post(sign(compare($o1_1$,$o2_1$)) = -sign(compare($o1_2$,$o2_2$))

\end{lstlisting}
\end{minipage}
& \hspace{2cm}
\begin{minipage}[b]{0.4\textwidth}
\lstset{basicstyle=\ttfamily\scriptsize}
\lstset{language=C,morekeywords={predicates,composition}}
\begin{lstlisting}[mathescape=true]
predicates:
  $len1_1=len2_1$, $len1_1=len2_2$,
  $o1_1=o2_2$, $o2_1=o1_2$,
  $len1_1=len2_2$, $len2_1=len1_2$,
  $ret_1=-ret_2$, $i_1<len1_1$,
  $i_2<len1_2$, $i_1=i_2$,
  $i_1=i_2-1$, $i_2=i_1-1$,
  $o1_1[i_1]=o2_1[i_1]$,
  $o1_2[i_2]=o2_2[i_2]$,
  $o1_1[i_1+1]=o2_1[i_1+1]$,
  $o1_2[i_2+1]=o2_2[i_2+1]$
\end{lstlisting}
\vspace{0.5cm}
\end{minipage}
\end{tabular}

\begin{minipage}{\textwidth}
\footnotesize
\centering
\lstset{basicstyle=\ttfamily\scriptsize}
\lstset{language=C,morekeywords={predicates,composition}}
\begin{lstlisting}[mathescape=true]
composition =
  if (($i_2=i_1+1$ && $i_2<len_2$) || ($i_2=i_1$ && $i_2<len_2$ && $o1_2[i_2] = o2_2[i_2]$ &&
        $o1_2[i_2+1] \neq o2_2[i_2+1]$ && $flag_2$ && !$flag_1$)
      step(1);
  else if (($i_1=i_2+1$ && $i_1<len_1$) || ($i_1=i_2$ && $i_1<len_1$ && $o1_1[i_1] = o2_1[i_1]$ &&
            $o1_1[i_1+1] \neq o2_1[i_1+1]$ && $flag_1$ !$flag_2$))
      step(2);
  else
      step(1,2);
\end{lstlisting}
\end{minipage}
\caption{Comparator example with potentially unbalanced loops. \label{fig:example_arrayint}}
\end{figure}

The example in \Cref{fig:example_arrayint} is a comparator based on a Java comparator from the evaluation comparator programs. The comparator was modified to have loop that might perform two steps in a single iteration. The 2-safety property to prove for the comparator is anti-symmetry, i.e. the pre-condition is $o1_1 = o2_2 \wedge o1_2 = o2_1$ and the post-condition is $sign(compare(o1_1,o2_1)) = -sign(compare(o1_2,o2_2))$. %
The figure also describes a composition that aligns the loops according to the value of $flag$. This yields a composed program that has an invariant that proves the desired property in the predicates language from \Cref{fig:example_arrayint}.

\subsection{SquaresSum}
\begin{figure}[H]
\footnotesize
\centering
\begin{tabular}{cc}
\begin{minipage}{0.4\textwidth}
\lstset{basicstyle=\ttfamily\scriptsize}
\lstset{language=C,morekeywords={pre,post,predicates,composition}}
\begin{lstlisting}[mathescape=true]
pre($a_1<a_2$ && $b_2<b_1$)

squaresSum(int a, int b){
  assume(0 < a < b);
  int c=0;
  while (a<b) {c+=a*a; a++;}
  return c;
}

post($c_2<c_1$)
\end{lstlisting}
\end{minipage}
& \hspace{1cm}
\begin{minipage}{0.4\textwidth}
\lstset{basicstyle=\ttfamily\scriptsize}
\lstset{language=C,morekeywords={predicates,composition}}
\begin{lstlisting}[mathescape=true]
predicates:
  $c_1>c_2$, $c_1=c_2$, $a_1<a_2$
  $a_1=a_2$, $b_1>b_2$, $a_1<b_1$
  $a_2<b_2$, $b_1>1$, $b_2>1$
\end{lstlisting}
\vspace{0.5cm}
\lstset{basicstyle=\ttfamily\scriptsize}
\lstset{language=C,morekeywords={predicates,composition}}
\begin{lstlisting}[mathescape=true]
composition =
  if ($a_1 < a_2$)
  	step(1);
  else
  	step(1,2);
\end{lstlisting}
\end{minipage}
\end{tabular}
\caption{A program that computes $\mathlarger{\sum}_{a\leq n<b}{n^2}$. \label{fig:example_sqsum}} %
\end{figure}
For the program described in \Cref{fig:example_sqsum} we consider the monotonicity property - a 2-safety property with pre-condition $[a_2,b_2]\subset[a_1,b_1]$ and post-condition $c_2<c_1$. Considering a composition that aligns the computations to start together and run simultaniously, it is easy to see that $c_1<c_2$ for unbounded number iterations. However, in \Cref{fig:example_sqsum} we see a composition that eases the task of finding a proof by scheduling the copies such that $c_2<c_1$ holds from the first iteration of copy 2 and to the end of both computations.

\subsection{ArrayInsert}
The program with a detailed explanation of its proof using a composition-invariant pair are presented in \Cref{sec:intro}.
\section{Demonstrating the Interplay Between Self Composition and Inductive Invariants}
\label{sec:insufficieny}

We illustrate the effect of the self composition function on the difficulty of verifying the obtained composed program, as well as the need for a semantic self composition function %
on the simple example depicted in \Cref{fig:example_power}. The program receives as input an integer $x$ and a secret bit $h$, and outputs $y = 2x^2$.
The desired specification is that the output does not depend on $h$, which is indeed the case. Formally, this is a $2$-safety property with pre-condition $x_1=x_2$ and post-condition  $y_1 = y_2$, 
requiring that in any two terminating executions that start with the same values for $x$, the final value of $y$ is the same.

As explained earlier, any fair self composition function can be soundly used to reduce the $2$-safety problem to an ordinary safety problem.
This is because the variables of the two copies of the program are completely disjoint, making the states completely independent. Therefore, the output of each copy does not depend on the actual interleaving of the two copies. As a result, if some interleaving (a fair self composition function) violates the postcondition, all of them will. That is, the actual interleaving does not affect the soundness of the reduction to traditional safety.
However, when we turn to verifying the safety of the composed program by finding an inductive invariant in a given language, the specific self composition function used plays a significant role.
For example, consider a composition that ``synchronizes'' the two copies in each control structure (e.g.~\cite{DBLP:conf/esop/EilersMH18}). Such a composition runs the two copies of the loop in parallel until one copy exits the loop, and then continues to run the other copy. The self composed program obtained by this composition is displayed in \Cref{fig:example-sc-bad}. 

\begin{figure}[H]
\lstset{basicstyle=\ttfamily\scriptsize}
\lstset{language=C,morekeywords={predicates,composition}}
\begin{lstlisting}[mathescape=true]
doubleSquare_sc(bool h1, bool h2, int x){
  int z1, z2, y1 = 0; y2 = 0;
  if(h1) {z1 = 2*x;} else {z1 = x;}
  if(h2) {z2 = 2*x;} else {z2 = x;}
  while (z1>0 || z2>0) {
    if (z1>0) {z1--; y1 = y1+x;}
    if (z2>0) {z2--; y2 = y2+x;}
  }
  if(!h1) {y1 = 2*y1;}
  if(!h2) {y2 = 2*y2;}
  return y1, y2;
}
\end{lstlisting}
\caption{Self composition for the program depicted in \Cref{fig:example_power} based on~\cite{DBLP:conf/esop/EilersMH18}. \label{fig:example-sc-bad}}
\end{figure}

We show that for this composition, there exists \emph{no inductive invariant} in quantifier free linear integer arithmetic (QFLIA) that is sufficient for establishing safety of the composed program.

\begin{proof}
If we examine the set $R$ of the reachable states of the composed program at the exit point we see that it includes (for every natural $n$):
\begin{footnotesize}
\begin{align*}
(x, y_1, z_1, y_2, z_2) \mapsto \qquad & (n, 0, 2n, 0, n)\\
& (n, n, 2n-1, n, n-1)\\
& \ldots \\
& (n, kn, 2n-k, kn, n-k)\\
& \ldots \\
& (n, n^2, n, n^2, 0) \\
& (n, n^2+n, n-1, n^2, 0) \\
& \ldots \\
& (n, n^2+kn, n-k, n^2, 0)\\
& \ldots \\
& (n, 2n^2, 0, n^2, 0)\\
\end{align*}
\end{footnotesize}
(We omit the second copy of $x$ since both copies are equal in all the reachable states -- a fact that is also expressible in QFLIA -- and similarly, we omit $h_1$ and $h_2$.)

Clearly, an inductive invariant must be satisfied by all of these states, since all of them are reachable.
However, we show that any QFLIA formula that is satisfied by all of these states is also satisfied by a state that reaches a bad state (i.e., a state where $y_1 \neq y_2$), thus if it is safe, it necessarily violates the consecution requirement, which means it is not an inductive invariant.

Let $\varphi = \varphi_1 \lor \ldots \varphi_r$ be a QFLIA formula, written in DNF form, where each $\varphi_i$ is a cube (conjunction of literals). Define $R_1,\ldots,R_r \subseteq R$ such that $R_i = \{s \in R \mid s \models \varphi_i\}$ includes all states in $R$ that satisfy $\varphi_i$. We show that there exists $i$ such that $\varphi_i$ is also satisfied by a state that reaches a bad state.

$R$ includes infinitely many ``points'' of the form $n, n^2, n, n^2, 0$ where $n$ is an even number. Therefore, since there are finitely many $R_i$'s that together cover $R$, there exists $i$ such that $R_i$ also includes infinitely many such points. Take two such points $(n, n^2, n, n^2, 0)$ and $(m, m^2, m, m^2, 0)$  in $R_i$ where $n \neq m$. Then $(1/2(n+m), 1/2(n^2+m^2), 1/2(n+m), 1/2(n^2+m^2), 0)$ is a state (all values are integers) in the convex hull of $R_i$. In particular, it must satisfy $\varphi_i$ ($\varphi_i$ is a cube in LIA that is satisfied by all states in $R_i$, hence it is also satisfied by all states in its convex hull).

However, when executing the while loop starting from the state $x \mapsto 1/2(n+m), y_1 \mapsto 1/2(n^2+m^2), z_1 \mapsto 1/2(n+m), y_2 \mapsto 1/2(n^2+m^2), z_2 \mapsto 0$, the outcome is the state $x \mapsto 1/2(n+m), y_1 \mapsto 1/2(n^2+m^2)+1/4(n+m)^2, z_1 \mapsto 0, y_2 \mapsto 1/2(n^2+m^2), z_2 \mapsto 0$, where $y_1 \neq y_2$, hence safety is violated.

This means that $\varphi$ is not an inductive invariant strong enough to establish safety of the composed program, in contradiction.
\qed
\end{proof}

In contrast, with the composition function inferred by \Pdsc (see \Cref{fig:example_power}), the composed program has an inductive invariant in QFLIA.  \fi

\end{document}